\documentclass[acmsmall]{acmart}

\usepackage{xspace}
\usepackage[normalem]{ulem}
\usepackage{algorithm}
\usepackage[noend]{algpseudocode}

\DeclareMathOperator*{\argmin}{argmin}
\DeclareMathOperator*{\conflicts}{\textsc{conflicts}}
\DeclareMathOperator*{\criticality}{\textsc{criticality}}
\DeclareMathOperator*{\routelayer}{\textsc{route-seq}}

\usepackage[]{hyperref}
\usepackage{enumitem}
\usepackage{amsthm}
\usepackage{thmtools}
\usepackage{thm-restate}
\usepackage{physics} 
\usepackage[capitalise]{cleveref}

\Crefname{algorithm}{Alg.}{Algs.}

\usepackage{subcaption}
\usepackage{booktabs}
\usepackage{qcircuit}
\usepackage{tcolorbox}
\usepackage{wrapfig}
\usepackage{pifont}
\tcbuselibrary{most}
\renewcommand{\paragraph}[1]{\vspace{.05in}\noindent\textbf{\emph{#1}.~~~}}
\usepackage{tikz}
\usetikzlibrary{positioning}
\usetikzlibrary{matrix}
\usetikzlibrary{arrows}
\newcommand{\cmark}{\color{black}{\ding{51}}}
\newcommand{\xmark}{\color{red}{\ding{55}}}
\usepackage{xcolor}
\newcommand{\cnot}{\textsc{cnot}\xspace}
\newcommand{\swap}{\textsc{swap}\xspace}
\newcommand{\T}{\textsc{t}\xspace}


\newcommand{\ours}{\textsc{dascot}\xspace}
\newcommand{\autobraid}{Autobraid\xspace}

\newcommand{\layerplace}{\ours-\textsc{map}\xspace}

\newcommand{\lookaheadroute}{\textsc{\ours-route}\xspace}
\newcommand{\interactgraph}{I}

\newcommand{\oursrandmap}{\textsc{\ours-randmap}\xspace}
\newcommand{\oursrandroute}{\textsc{\ours-randroute}\xspace}

\newcommand{\ourslimited}{\textsc{limited}\xspace}
\newcommand{\temp}{\tau}
\newcommand{\finaltemp}{\tau_f}
\newcommand{\inittemp}{\tau_0}
\newcommand{\coolrate}{r}
\newcommand{\cost}{f}

\newcommand{\squaresparse}{Square Sparse\xspace}
\newcommand{\compact}{Compact\xspace}

\newcommand{\scc}{\textsc{scmr}\xspace}
\newcommand{\scmr}{\textsc{scmr}\xspace}
\newcommand{\scr}{\textsc{scr}\xspace}
\newcommand{\rspace}{R_\textsc{space}}
\newcommand{\rtime}{R_\textsc{time}}
\newcommand{\arch}{A}
\newcommand{\circuit}{C}
\newcommand{\map}{M}

\newcommand{\graph}{G}
\newcommand{\vertices}{V}
\newcommand{\edges}{E}
\newcommand{\msf}{MS}
\newcommand{\qset}{Q}

\newcommand{\sat}{\textsc{sat}\xspace}

\newcommand{\np}{\textsc{np}\xspace}
\newcommand{\psp}{\textsc{psp}\xspace}
\newcommand{\sched}{s}
\newcommand{\jobset}{J}

\newcommand{\job}{j}
\newcommand{\depcirc}{\mathit{Dep}}
\newcommand{\cyccirc}{\mathit{Cycle}}

\newcommand{\vmap}{\textsf{map}\xspace}
\newcommand{\vedge}{\textsf{path}\xspace}
\newcommand{\vexec}{\textsf{exec}\xspace}

\newcommand{\stepcount}{t}
\newcommand{\nextl}{R_\textsc{next}}

\newenvironment{mybox}[1][gray!10]{  
    \begin{tcolorbox}[   
        left=0pt,
        right=0pt,
        top=0pt,
        bottom=0pt,
        colback=#1,
        colframe=#1,
        width=0.99\dimexpr\columnwidth\relax,
        boxsep=2pt,
        arc=0pt,outer arc=0pt,
    ]
}{
    \end{tcolorbox}
}
\newtheorem{definition}{Definition}
\setcopyright{cc}
\setcctype{by}
\acmDOI{10.1145/3720416}
\acmYear{2025}
\acmJournal{PACMPL}
\acmVolume{9}
\acmNumber{OOPSLA1}
\acmArticle{82}
\acmMonth{4}
\received{2024-10-11}
\received[accepted]{2025-02-18}

\begin{document}

\title{Dependency-Aware Compilation for Surface Code Quantum Architectures}

\author{Abtin Molavi}
\orcid{0009-0006-1841-9565}
\affiliation{%
  \institution{University of Wisconsin-Madison}
  \country{USA}
}
\email{amolavi@wisc.edu}

\author{Amanda Xu}
\orcid{0009-0008-2279-5816}
\affiliation{%
  \institution{University of Wisconsin-Madison}
  \country{USA}
}
\email{axu44@wisc.edu}

\author{Swamit Tannu}
\orcid{0000-0003-4479-7413}
\affiliation{%
  \institution{University of Wisconsin-Madison}
  \country{USA}
}
\email{stannu@wisc.edu}

\author{Aws Albarghouthi}
\orcid{0000-0003-4577-175X}
\affiliation{%
  \institution{University of Wisconsin-Madison}
  \country{USA}
}
\email{aws@cs.wisc.edu}

\begin{abstract}
  Practical applications of quantum computing depend on fault-tolerant devices with error correction. 
  Today, the most promising approach is a class of error-correcting codes called \emph{surface codes}.
  We study the problem of compiling quantum circuits for quantum computers implementing surface codes.
  Optimal or near-optimal compilation is critical for both efficiency and correctness. 

  The compilation problem requires (1) \emph{mapping} circuit qubits to the device qubits and (2) \emph{routing} execution paths between interacting qubits.
  We solve this problem efficiently and
  near-optimally with a novel algorithm that exploits the \emph{dependency structure} of circuit operations
  to formulate discrete optimization problems that can be approximated via \emph{simulated annealing},
  a classic and simple algorithm.
  Our extensive evaluation shows that our approach
  is powerful and flexible for compiling realistic workloads.

\end{abstract}
\begin{CCSXML}
  <ccs2012>
     <concept>
         <concept_id>10011007.10011006.10011041</concept_id>
         <concept_desc>Software and its engineering~Compilers</concept_desc>
         <concept_significance>500</concept_significance>
         </concept>
     <concept>
         <concept_id>10010583.10010786.10010813.10011726</concept_id>
         <concept_desc>Hardware~Quantum computation</concept_desc>
         <concept_significance>500</concept_significance>
         </concept>
   </ccs2012>
\end{CCSXML}
  
\ccsdesc[500]{Software and its engineering~Compilers}
\ccsdesc[500]{Hardware~Quantum computation}

\keywords{quantum error-correction, simulated annealing}
\maketitle

\section{Introduction}

Quantum computation promises to surpass classical methods in important domains, potentially unlocking breakthroughs in materials science, chemistry, machine learning, and beyond.
However, as individual physical qubits and operations are error-prone, these applications require an error-correction scheme for detecting and correcting faults.
Quantum error-correction suppresses errors with redundancy: encoding the state of a single logical qubit using several physical qubits.
Experimentalists have recently demonstrated error suppression for a single logical qubit \cite{PhysRevLett.129.030501,Acharya2022SuppressingQE,PhysRevX.11.041058} and 
small multi-qubit systems \cite{dasilva2024demonstration,Bluvstein_2023,reichardt2024demonstrationquantumcomputationerror,Google_Quantum_AI_and_Collaborators2024-fp}.

To harness the full power of the fault-tolerant quantum computers on the horizon, we need optimizing
compilers that convert circuit-level descriptions of quantum programs to error-corrected
elementary operations while preserving as much parallelism as possible. Quantum compute is a scarce 
resource, so inefficient compilation can be extremely costly.
Further, the longer the computation, the higher the probability of logical errors, which 
affect the result.

Therefore, our goal is to answer the following question:
\begin{center}
\emph{How can we compile a given circuit for a fault-tolerant device such that
execution time is minimized?}
\end{center}
We target a well-studied type of error-correction scheme called a \emph{surface code} \cite{Knill_1998, Fowler_2012,Litinski2018AGO}. A surface code
quantum device embeds logical qubits into a two-dimensional grid of physical qubits.
Two-qubit gates impose limitations on the execution of a quantum circuit by introducing contention constraints. Each two-qubit gate occupies a path on the grid and simultaneous paths cannot cross.
Gates which can theoretically be executed in parallel may be forced into sequential execution if the path of one  ``blocks'' the other, as shown in \cref{fig:blocked}.
A compiler must carefully \emph{map} qubits to grid locations and \emph{route} two-qubit gates
such that conflicts between gates are minimized and parallelism is maximized. 
We call this the \emph{surface code mapping and routing} (\scc) problem.

\begin{figure}
  \centering
  \includegraphics[scale=0.85]{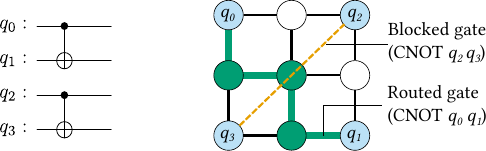}
  \caption{Spatial constraints preventing parallel execution}
  \label{fig:blocked}
\end{figure}

Existing work on the \scc problem is limited along two axes, \emph{optimality} and \emph{generality} (see \cref{tbl:comparison} for a summary):
(1) \emph{optimality}: some techniques do not optimize execution time \cite{watkins2023high}, or
optimize routing with respect to a fixed, trivial mapping \cite{PRXQuantum.3.020342}; 
(2) \emph{generality}: other techniques \cite{autobraid,Javadi-abhari-17,zhu2023ecmas} do not account for routing \T gates, a major source of two-qubit gates, and/or assume a sparse grid layout \cite{autobraid,Javadi-abhari-17,PRXQuantum.3.020342}.
This means they cannot be directly applied to other architectures which pack qubits into a smaller grid area \cite{Litinski2018latticesurgery,watkins2023high}.
Compact architectures are critical because they require fewer physical resources,
 but they make \scc more challenging. 

We present an \emph{optimizing} mapping and routing algorithm
that addresses the full \emph{generality} of the \scc problem, targeting circuits with \T gates and architectures with arbitrary layouts.
\begin{table}[t]
  \caption{Feature summary of our approach and prior work}
  \label{tbl:comparison}
  \centering
  \small
  \begin{tabular}{llll}
          \toprule
          Compiler & Optimizing? & T gates? & Any arch?\\
          \midrule
          LSC \cite{watkins2023high} & \xmark & \cmark & \cmark \\
          EDPC \cite{PRXQuantum.3.020342} & \cmark \color{black}{* (route only)} & \cmark & \xmark \\
          Autobraid  \cite{autobraid} & \cmark & \xmark & \xmark \\
          \ours (this paper) & \cmark  & \cmark & \cmark \\
          \bottomrule
  \end{tabular}

\end{table}

\begin{figure}[t]
  \includegraphics[scale=.85]{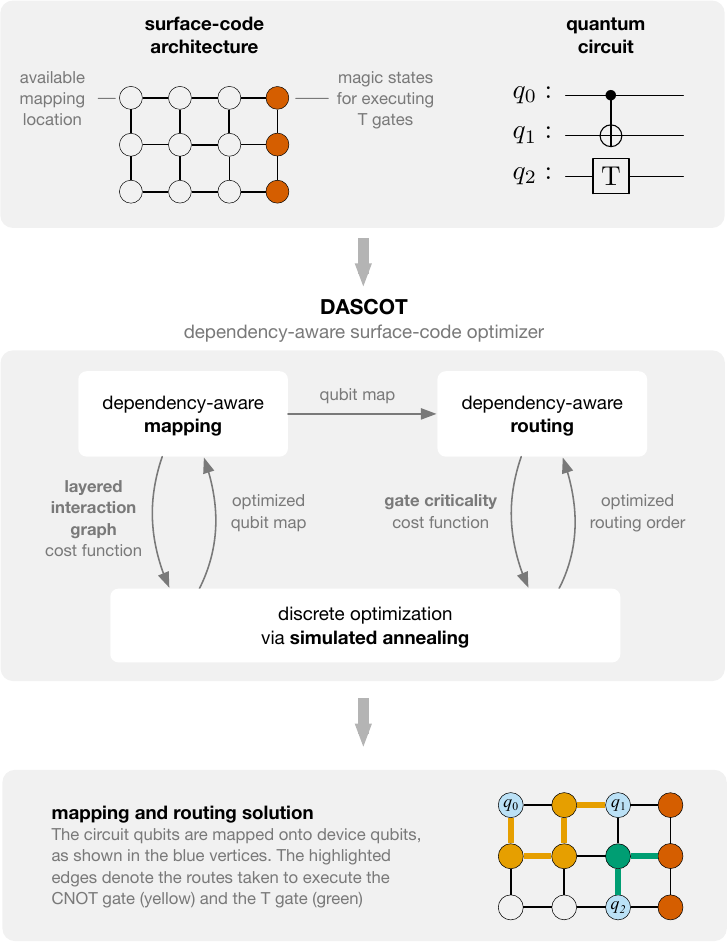}
  \caption{Overview of our approach, \ours}
  \label{fig:overview}
\end{figure}

\paragraph{The \ours approach}
Obtaining optimal or near-optimal \scc solutions is critical for both efficiency and correctness of quantum computation, but the problem is \np-complete \cite{zhu2023ecmas}.
We find that brute-force solutions based on, e.g., satisfiability solvers, do not scale to realistic problem sizes, 
as previously shown in the analogous mapping and routing problem for near-term noisy devices without error correction \cite{Wille2019MappingQC,olsq,Lin2023ScalableOL,Molavi2022QubitMA},

We present a new algorithm called \ours (Dependency-Aware Surface Code OpTimization) which mitigates the complexity of \scc by breaking mapping and routing into independent subproblems, as illustrated in \cref{fig:overview}. 
The key insight underlying \ours is \emph{dependency-awareness}: we exploit the dependency structure between circuit operations to carefully optimize the mapping and routing decisions.

\paragraph{Dependency-aware mapping}
Our goal with mapping is to map circuit qubits in such a way that is less 
likely to cause conflicts in routing gates. 
To do so, we need to track which pairs of qubits interact in the circuit.
While this idea has appeared in different guises in the qubit mapping literature, existing solutions tend to be \emph{flow-insensitive} in nature~\cite{Javadi-abhari-17,autobraid}, meaning that qubit interactions are modeled at a rough granularity that does not take into account gate ordering in the circuit.

We make the key insight that to build a good candidate mapping, we need finer-grained, flow-sensitive modeling of qubit interactions. 
For example, if one gate logically depends on another, then no valid routing solution will execute the two gates simultaneously,
so there is no reason for the mapping stage to avoid crossing paths between them. By distinguishing
between potential crossings which are likely to be problematic at the routing step and those which are not, we
can better evaluate a candidate mapping. We capture this information in a new
data structure we call a \emph{layered interaction graph}, which is the basis of our \emph{dependency-aware}
mapping approach.

\paragraph{Dependency-aware routing}
At the routing stage, our key insight is that the order in which gates are routed is crucial to the optimality of the \scc solution. Specifically, we formulate a discrete optimization problem over routing orders, with the goal 
of discovering a routing order that routes gates that are
\emph{critical} in the dependency structure of the circuit.
A critical gate has lots of other gates that consume its results later in the circuit,
so routing it early unlocks deeper layers of the circuit for routing.
Highly critical gates are prioritized to avoid a long ``tail'' of unexploited parallelism as the gates on the critical path are executed in sequence.
We observe that greedily routing gates in order of criticality, as considered in prior work \cite{Javadi-abhari-17}, does not maximize the \emph{total} criticality of routed gates because a single
gate can block the execution of multiple gates of slightly lower criticality. However, our new perspective 
on the routing problem as a search over the space of routing orders allows us to find the order which is best with respect to total criticality.

\paragraph{Reductions to discrete optimization}
The mapping and routing subproblems each induce a discrete optimization problem with respect to
a cost function. For mapping, we are searching for the mapping which minimizes the overlaps between gates in the same layer.
For routing, we seek the gate ordering which maximizes the progress towards executing the entire circuit.
In both cases, we solve these optimization problems with \emph{simulated annealing} \cite{doi:10.1126/science.220.4598.671},
a classic, easy-to-implement algorithm for solving hard, discrete optimization.
Additionally, simulated annealing supports different tradeoffs between compute time and solution quality.

\paragraph{\sat-based optimal baseline} 
We present an optimal algorithm for the \scmr problem based on a novel \sat encoding of \scmr.
Much like constraint-based approaches to mapping and routing for quantum computers without error correction \cite{olsq,Lin2023ScalableOL,shaik_et_al:LIPIcs.SAT.2024.26},
we construct a boolean formula that is satisfiable if and only if there exists an \scmr solution with a given execution time.
Then, we find an optimal solution through a sequence of calls to a \sat solver, starting from a known lower bound and incrementing until the formula is satisfiable. 
Though the \sat-based approach does not scale to large-scale applications, it allows us to find exact solutions for small circuits and to precisely 
assess the optimality of \ours for these instances. 

\paragraph{Evaluation}
We present a thorough evaluation of \ours on  
a comprehensive suite of benchmarks consisting of over 200 circuits implementing key
quantum algorithms and target two architectures representing different space--time trade-offs.
Our results confirm the efficiency and effectiveness of \ours. For example:
(1) in terms of solution quality, \ours matches or outperforms a state-of-the-art approach for \scmr without \T gates, \autobraid \cite{autobraid},
 on 84\% of benchmarks, with significant gains on \emph{dense} circuits;
  including up to 31\% cost improvement on Quantum Fourier Transform circuits; 
(2) in 69 circuits where we can compute an optimal solution using a satisfiability solver, \ours is within 25\% of optimal for all but 5;
(3) \ours can solve benchmarks with thousands of gates within minutes and can be tuned to different tradeoffs between
  compute time and optimality.

\paragraph{Contributions} Our contributions are the following:
\begin{itemize}
  \item A formalization of the surface code mapping and routing problem (\scc) capturing the combinatorial task 
  of compiling  circuits for surface code devices (\cref{sec:prob_def}).
  Compared to existing work, our formalization models the problem in more detail and with more flexibility, explicitly modeling magic states and allowing for arbitrary grids.

  \item A dependency-aware mapping algorithm using a new data structure called a \emph{layered interaction graph} (\cref{sec:mapping}).
  In comparison to prior work~\cite{Javadi-abhari-17, autobraid}, the layered nature of our interaction graphs captures finer-grained---\emph{flow-sensitive}---interaction patterns between qubits, allowing us to choose better candidate mappings that maximize parallelism.
  \item   We make the key observation is that routing order is crucial for the optimality of \scmr solutions. 
  We therefore present a dependency-aware routing algorithm which searches over routing orders and chooses the next gates to route based on progress through the dependency structure of the circuit (\cref{sec:routing}).
  This is in contrast to greedy approaches~\cite{autobraid} that choose to route gates by order of their importance, disregarding the downstream effects of such actions, and end up with suboptimal global solutions.

  \item A \sat-based \scmr algorithm for finding optimal solutions on small circuits (\cref{sec:opt-solver}).
  To our knowledge, this is the first optimal solver for the mapping and routing problem on a fault-tolerant architecture.
  The \sat-based algorithm allows us to quantify the distance from optimality for our approximate, simulated-annealing-based solutions.
  \item An empirical evaluation demonstrating the optimality and generality of \ours (\cref{sec:eval}).
\end{itemize}

\section{Surface Code Compilation: A Primer}

In this section, we describe the surface code mapping and routing problem, present illustrative examples, and provide a high-level overview of our approach.

\paragraph{Quantum error correction} Physical realizations of qubits are extremely delicate and subject to unintended changes of state, leading to computational errors.
The leading approach to address errors is called a surface code \cite{Fowler_2012}.
A surface code encodes a logical qubit using a two-dimensional lattice of physical qubits; an example of a logical qubit encoded using a surface code is shown on the left of  \cref{fig:blank_arch}. 
Each physical qubit in the lattice is designated as either a data qubit (large circles in the figure) or a measurement qubit (small circles).
Data qubits carry the state of the logical qubit, whereas measurement qubits are repeatedly measured to detect errors.
There are two types of measurement qubits (indicated by color of the surrounding region in the figure) to detect both bit-flip and phase-flip errors. As depicted on the right of \cref{fig:blank_arch}, a surface code quantum device consists of multiple surface code logical qubits encoded into one large lattice of physical qubits.

\begin{figure}[h]
  \includegraphics[width=0.46\linewidth]{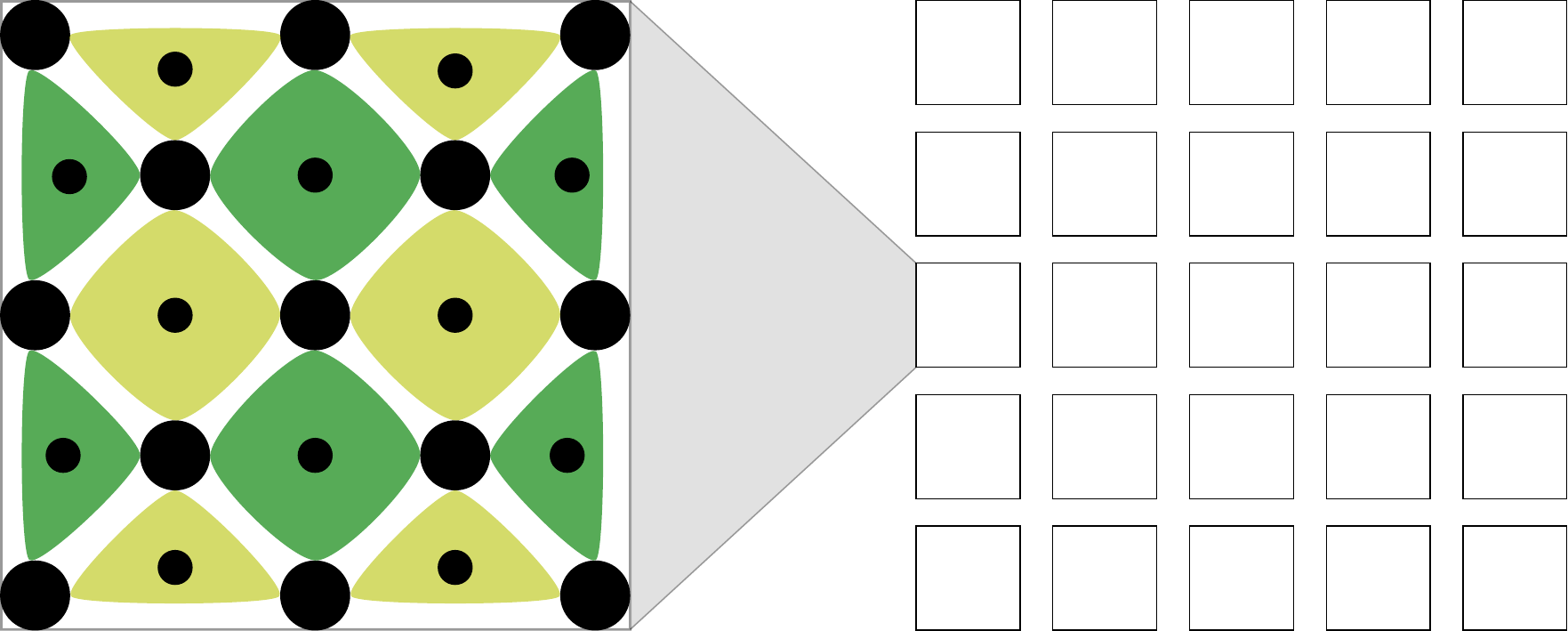}
  \caption{Quantum device implementing a surface code}
  \label{fig:blank_arch}

\end{figure}

\subsection{Surface Code Mapping and Routing by Example}
\label{sec:examples}
We will now introduce the surface code mapping and routing problem by developing some examples. The grid from \cref{fig:blank_arch}
will be our quantum computing architecture. The logical qubits in the grid may be used for storing qubits of a circuit or, as we will see shortly, for routing \cnot gates.

Suppose we wish to execute the circuit consisting of two parallel \cnot gates shown in the top left of \cref{fig:simpl_scmr}.
First, we need to choose a logical qubit to encode each of the four qubits in the circuit.
This is called a \emph{mapping}. One possible mapping is shown below the circuit. Then, we need to plan the execution of the gates. 
There are two main proposed implementations for executing \cnot gates on two logical qubits: defect braiding \cite{Fowler_2012}
and lattice surgery \cite{Horsman_2012_LS,Fowler2018LowOQ}. Both require establishing a path between the qubits. 
We focus on lattice surgery because of recent evidence that it is a more resource-efficient paradigm \cite{Fowler2018LowOQ},
but our core approach can also be applied to architectures based on the defect-braiding \cnot gate.

To apply a lattice surgery \cnot, we need to find a path of logical qubits on the grid, called ancilla qubits, from the control qubit 
to the target qubit. The path must connect a horizontal boundary of the control (the top or bottom edge) to a vertical boundary of the target,
making at least one ``bend".
The process of reserving ancillae for gates is called \emph{routing}. 
In our case, one \cnot gate requires an ancilla connection between $q_0$ and $q_1$, while the other requires one between $q_2$
and $q_3$. As long as these two connections do not overlap, the gates can be performed simultaneously
in a single \emph{time step}. A  routing solution that meets these requirements is shown in 
the bottom right of \cref{fig:simpl_scmr}.
The two routes are indicated by colored squares, with the longer green route corresponding to the \cnot gate applied to $q_0$ and $q_1$.

\begin{figure}
  \includegraphics{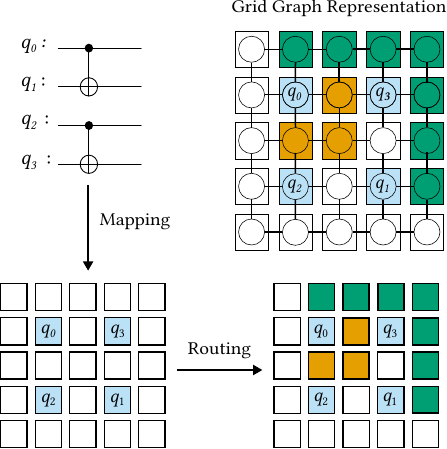}
  \caption{Simple instance of \scc}
  \label{fig:simpl_scmr}
\end{figure}

We can represent surface code mapping and routing with a grid graph
as shown in the upper right of \cref{fig:simpl_scmr}, which corresponds to the same instance and solution.
Specifically, each vertex in the grid represents a logical qubit, and we include an edge between each
adjacent pair of logical qubits, not including diagonals.

\paragraph{Optimal mapping \& routing}
The above example was fairly unconstrained. Any mapping leaving an ancilla space between each circuit qubit
has a corresponding optimal routing solution. This is not the case in general. Consider the example in \cref{fig:badmap} wherein
we are mapping the same circuit onto a $3\times3$ subset of our architecture.
The depth of this circuit is 1, so it is theoretically executable in a single time step.
However, with the choice of the mapping on the left labeled ``Mapping 1'',
there is no way to simultaneously execute the \cnot gates, as any paths from $q_0$ to $q_1$ and $q_2$ to $q_3$ will have to cross. Therefore, execution of the circuit will take two time steps.
If instead we choose ``Mapping 2'', we see that there is a routing solution that allows
for simultaneous execution, with the routes indicated by colored paths.

\begin{figure}[h]
  \includegraphics{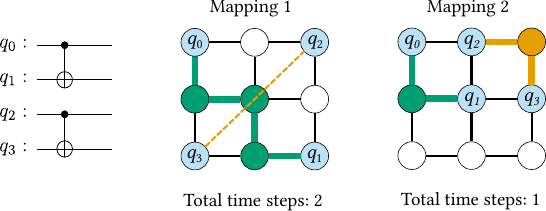}
  \caption{ Suboptimal mapping increases execution time}
  \label{fig:badmap}
\end{figure}

\paragraph{Executing T gates}
The $\T$ gate cannot be applied directly to surface code logical qubits. Instead, executing a $\T$ gate requires applying a lattice surgery \cnot between the input to the $\T$
gate and a logical qubit prepared in a so-called ``magic state'' \cite{Bravyi_2005}. We make a standard assumption
that magic state qubits are prepared in a separate region of the plane at a sufficiently high rate such that they are 
always available at dedicated storage locations \cite{PRXQuantum.3.020342}. The implication for our mapping and routing problem is that \emph{each
\T gate must be routed like a \cnot gate with the target qubit chosen from a fixed set of vertices provided as part of the input.}

We provide a simple example in \cref{fig:t-gates}. Since this circuit contains a \T gate,
we need to define magic state qubit locations. The $3\times3$ architecture has been extended with a column of magic state qubits along the right side, indicated with orange vertices. 
An optimal mapping and routing solution with one time step is shown. We have two simultaneous connections.
One is between qubits $q_0$ and $q_1$ and corresponds to the \cnot, while  
the other is between $q_2$ and
a magic state qubit, corresponding to the \T gate.
\begin{figure}
  \includegraphics{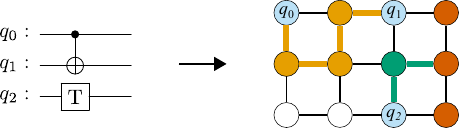}
  \caption{Example with a T gate }
  \label{fig:t-gates}
\end{figure}

\subsection{An Overview of Our Approach}
We now provide an overview of our approach (see \cref{fig:overview}).

\paragraph{Architectural flexibility}
There is a fundamental trade-off between space and time in choosing a surface code architecture.
A large grid with many available routing ancillae likely leads to shorter execution time, as many 
disjoint paths are available to route gates in parallel. A denser grid may save on qubits, but extend execution time.
The appropriate balance of qubit footprint and execution time may vary depending on characteristics of the underlying hardware and the computation
to be performed. Moreover, physical qubits are prone to defects \cite{Kreikebaum_2020,10.1145/3620665.3640362} which may force adapting the target 
architecture to omit certain regions of the grid. 

We design our approach to support arbitrary architectures. For example, given a 9 qubit circuit,
\cref{fig:arches} shows two standard architectures proposed in the literature \cite{Litinski2018AGO,Horsman_2012_LS,watkins2023high}
 which we call the \squaresparse and \compact architectures. These two architectures represent opposite ends of the space-time spectrum,
 with the \squaresparse architecture using many more qubits to allow routing more gates in parallel.

\begin{figure}[h]
  \begin{subfigure}{0.3\linewidth}
  \includegraphics[width=\linewidth]{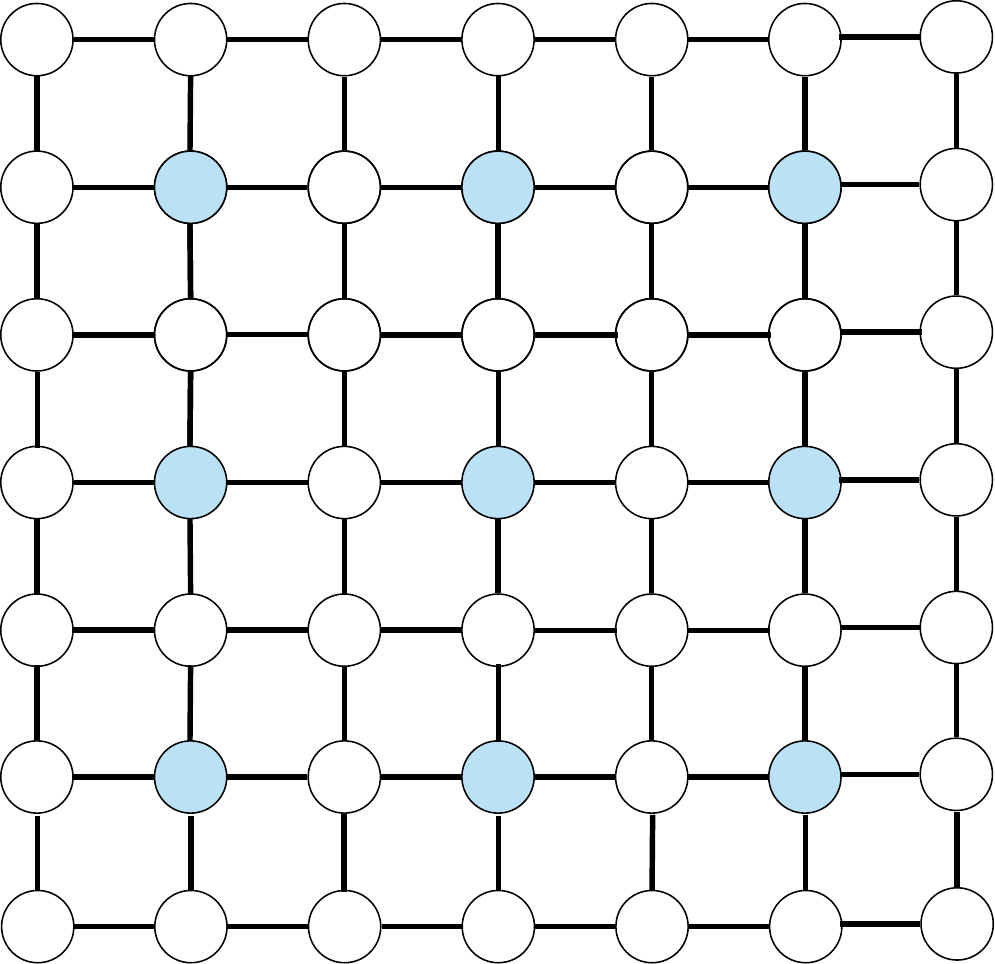}
  \caption{\squaresparse}
  \end{subfigure}
  \hspace{10pt}
  \begin{subfigure}{0.3\linewidth}
    \centering
    \includegraphics[width=\linewidth]{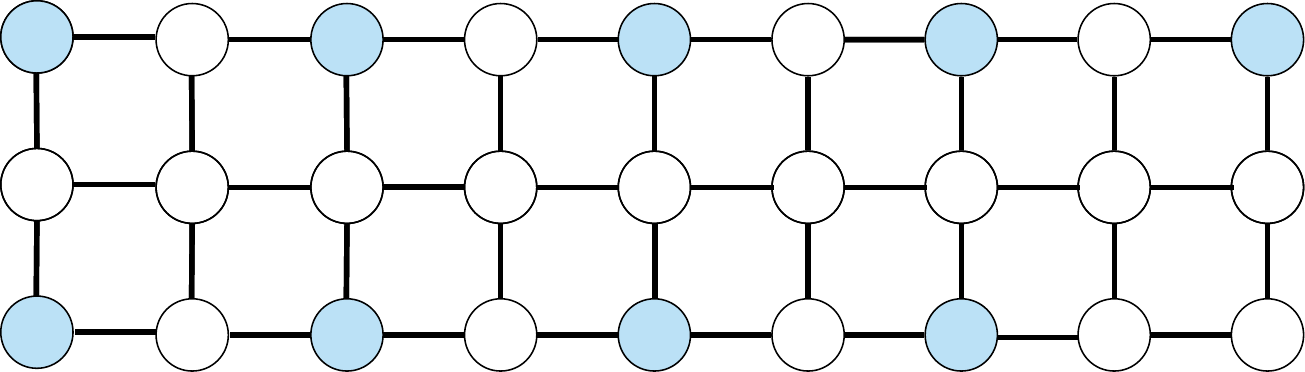}
    \caption{\compact}
    \end{subfigure}
  \caption{Architectures with different space-time tradeoffs}
  \label{fig:arches}
\end{figure}

\paragraph{Dependency-aware mapping} To determine whether a mapping is appropriate for a circuit, we need to estimate the number of times 
where routing one gate might prevent the routing of another. For example, consider the circuit in \cref{fig:map_ex}, which is a simplified version of the 4 qubit Quantum Fourier
Transform. The depth of this circuit is 5; equivalently, it can be partitioned into 5 layers of logically parallel gates (indicated with dashed lines). 

Ideally, a mapping solution should enable routing the gates of each layer simultaneously. 
Our approach efficiently searches the space of mappings for one that minimizes contention between gates within the same layer
using a simple data structure we call a \emph{layered} interaction graph. 
Here, the only nontrivial layer is layer 3.
To allow optimal routing of this layer, our mapping algorithm reserves disjoint regions (indicated with the dashed boxes) for the two gates, ensuring there exist nonoverlapping paths along which to route them.
We call this a \emph{dependency-aware} algorithm because it uses the dependency structure of the circuit 
to group gates into layers and establish which pairs of gates may be competing for routing resources. 

\begin{figure}[h]
  \includegraphics{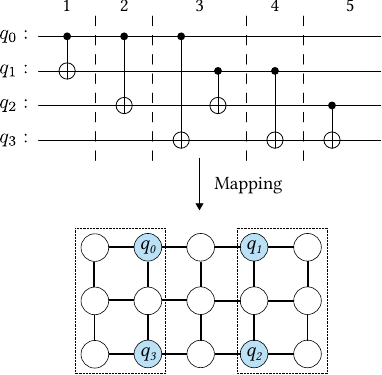}
  \caption{Illustrating dependency-aware mapping}
  \label{fig:map_ex}
\end{figure}

\paragraph{Dependency-aware routing} 
Once the qubit map is fixed, we proceed by iteratively routing as many logically parallel gates as possible.
We can reduce the space of possible routing solutions by routing each gate in sequence,
along the shortest available path. With this strategy, the order in which we iterate over the gates is important.

Consider \cref{fig:routing_ex}
with our simple circuit with one $\T$ gate and one $\cnot$. If we first route the \cnot gate along the shortest path, then the \T
gate must wait for the next time step as there is no path of available vertices between $q_2$ and a magic state qubit. On the other hand, if we 
start with the \T gate, then the \cnot can still be routed around the perimeter of the architecture, yielding a single time step.

\begin{figure}[h]
  \includegraphics[scale=1]{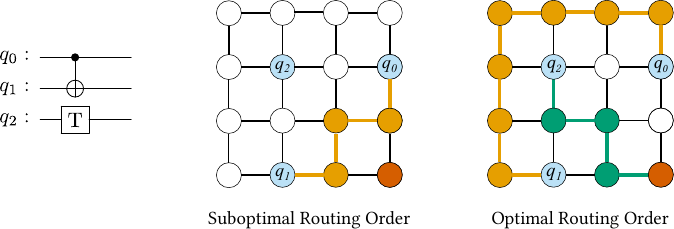}
  \caption{Illustrating dependency-aware routing}
  \label{fig:routing_ex}
\end{figure}

The key question is therefore: \emph{How do we choose the best routing order for a set of parallel gates?}
We formulate this challenge as a discrete optimization problem over routing orders. We search for an 
order which allows the simultaneous routing of many gates to make progress towards execution
of the complete circuit. Additionally, we prefer to route \emph{critical} gates, those with long chains
of dependent gates. Weighing gates by criticality prevents cascading effects from delaying a gate with many dependencies.

At the heart of dependency-aware mapping and routing is a minimization subroutine. Since our problem
is discrete in nature and infeasible to solve optimally, we use \emph{simulated annealing} to obtain solutions
which approach the minimum value. Simulated annealing is a classical, iterative optimization algorithm that allows us to trade compute time for solution quality by tuning the number of iterations.

\section{Surface Code Mapping \& Routing}
\label{sec:prob_def}
We now formalize surface code mapping and routing.

\paragraph{Architectures} Our abstraction for a fault-tolerant quantum computing architecture is a grid graph with some of the vertices
reserved for magic state qubits and others available for storing circuit qubits and routing gates. We use $\arch = (\vertices, \edges, \msf)$ to denote an architecture where $\graph = (\vertices,\edges)$ is a grid graph and  $\msf \subset \vertices$  are vertices reserved for magic state qubits.

Formally, the $m\times n$ grid graph is defined over the vertex set
$\vertices = \{1,\ldots, m\} \times \{1,\ldots, n\}$. The graph includes an edge between any pairs of vertices $(a,b)$
and $(c,d)$ such that either $|a-c| = 1$ or $|b-d|=1$. In the former case, we say that these vertices are 
\emph{horizontal neighbors}. In the latter, we say that they are \emph{vertical neighbors}.

\paragraph{Qubit maps}
Given a circuit $\circuit$ acting on a set of qubits $\qset$ and an architecture $A = (\vertices, \edges, \msf)$, a \emph{qubit map} is a function, denoted $\map$, that uniquely maps
  each qubit in $\qset$ to one of the non-magic-state qubits on the device (i.e., vertices $\vertices \setminus \msf$).

  \begin{example}
    \cref{fig:map_ex} shows a  qubit map where four qubits are mapped to four of the 15 vertices in the architecture.
    \end{example}
\paragraph{Dependency and routing}
Let $\circuit=\langle g_1, \ldots, g_k\rangle$ denote a circuit represented as a sequence of gate applications.
If two gates $g_i$ and $g_{j}$ act on a shared qubit and $i < j$, then we say
$g_j$ \emph{depends} on $g_i$. For example, in the circuit in \cref{fig:map_ex},
the gate $\cnot ~ q_1 ~ q_2$ depends on the gate $\cnot ~ q_0 ~ q_1$.
For routing, we must assign gates to execution time steps in a way that respects dependencies
in $C$. 
Also, it must be possible to execute all of the gates assigned to any particular
time step simultaneously on the architecture. Gates can be executed simultaneously if there are
vertex-disjoint paths, of the proper shape, along which to route them. We say that an
assignment of gates to time steps and paths is a \emph{valid gate route} if it meets these
requirements. In the definition to come, $\langle V \rangle$ denotes the set of finite sequences of vertices $\langle v_1, \ldots, v_m\rangle$ from a vertex set $V$.

\paragraph{Valid gate routes}
  Given an architecture $\arch = (\vertices,\edges, \msf)$,  circuit $\circuit$, and qubit map $M$,
  a \emph{valid gate route} is a natural number $t$ representing a number of time steps,
  and a pair of functions $\rspace : \circuit \to \langle V \rangle$
  and $\rtime :  \circuit \to \{1, \ldots, t\}$
  that meets the following criteria:
  \begin{itemize}
    \item \textbf{Data Preservation}: No vertex representing a circuit qubit or magic state qubit is used for routing by $\rspace$.
    \item  \textbf{\cnot Routing}: If $g =$ \cnot $q_i~ q_j$, then $\rspace(g)$ is a path from a vertical neighbor of $\map(q_i)$ to a horizontal neighbor of $\map(q_j)$.
    \item \textbf{\T Routing}: If $g = \T$ $q_i$, then $\rspace(g)$ is a path from a vertical neighbor of $\map(q_i)$ to a horizontal neighbor of a magic state.

    \item \textbf{Gate Order}: If $g_j$ depends on $g_i$,  $\rtime(g_i) < \rtime(g_j)$.
    \item \textbf{Disjoint Paths}: If $\rtime(g_i) = \rtime(g_j)$, then the paths $\rspace(g_i)$ and  $\rspace(g_j)$ share no vertices.
  \end{itemize}
\begin{example}
Consider \cref{fig:routing_ex}. Let $g_1$ be the gate  $\cnot ~ q_0 ~ q_1$ and $g_2$ be the gate  $\T~ q_2$. 
With the optimal routing order on the right, we show a valid gate route with $t=1$ time steps, so $\rtime(g_1) = \rtime(g_2) = 1$. The path 
$\rspace(g_1)$ is the longer yellow path, while $\rspace(g_2)$ is the shorter green path. 
\end{example}

With this terminology in place, the surface code mapping and routing problem (\scmr) can be concisely defined as the task of
finding a qubit map and gate route pair with the minimal number of time steps.

\begin{mybox}
  \textbf{The \scmr problem:}
  Given an architecture $\arch$ and circuit $\circuit$, find a qubit map $\map$ and a valid
  gate route $(t, \rtime, \rspace)$ such that $t$ is minimized.
\end{mybox}

\section{The \ours Algorithm}
\label{sec:relaxations}
The \scc problem is \np-complete \cite{zhu2023ecmas},  
so exact optimization is infeasible for large instances.
To simplify the problem, we decouple the mapping and routing and solve each separately.
We present novel algorithms for mapping and routing that exploit circuit dependency structure and use simulated annealing to efficiently search the space of solutions.
We start with a short primer on simulated annealing.

\subsection{Simulated Annealing Primer}
In both mapping and routing, we define a \emph{cost function} $\cost$ and search for qubit maps and routing orders (respectively) with lower cost. 
Therefore \ours must efficiently solve problems of the form
$\min_{s \in \mathcal{S}} \cost(s)$, where $\mathcal{S}$ is a large discrete set.

For this purpose, we employ \emph{simulated annealing}, a standard and simple algorithm well-suited to approximating the 
global optimum for discrete problems with large search spaces and many local minima. Simulated annealing
begins by choosing a random candidate solution $s_{\mathit{curr}}$ in $\mathcal{S}$ and an initial temperature $\inittemp$.
Then, it iteratively applies the following probabilistic local search step, repeating until the current temperature $\temp$ reaches some threshold
$\finaltemp.$
\begin{itemize}
  \item Uniformly sample a random neighbor $s_{\mathit{next}}$ of $s_{\mathit{curr}}$
  \item If $\cost( s_{\mathit{next}}) < \cost( s_{\mathit{curr}})$, set $s_{\mathit{curr}}$ to $s_{\mathit{next}}$
  \item Otherwise, do so with probability that is a function of the current temperature and cost increase
  \item Reduce the temperature by a cooling rate  $\coolrate$
\end{itemize}

The non-zero probability of transitioning to a worse solution allows escape from local minima.
We choose the standard acceptance probability 
$\exp{-\frac{\cost(s_{\mathit{new}}) - \cost(s_\textit{curr})}{\temp}}$.
Thus, transitions to a worse solution are more likely early in the optimization, when
the temperature is relatively high. 

To instantiate simulated annealing, we need to define the cost function $\cost$, 
solution space $\mathcal{S}$, and $N(s)$, the set of neighbors of a solution $s$. The algorithm also has hyperparameters:
the initial temperature $\inittemp$,  termination temperature $\finaltemp$, and  cooling rate $\coolrate$;
we describe our choices for these in \cref{sec:eval}.

\subsection{Dependency-Aware Mapping}
  \label{sec:mapping}
In the mapping stage, our goal is to find a qubit map which is likely to admit a valid gate route with
few steps without explicitly computing the full routing. 
We present an approach based on a novel data structure called a \emph{layered interaction graph}.
Whereas prior work takes a ``flattened'' view of qubit interactions with no consideration of the ordering of gates, layered interaction graphs guide our search for a mapping using the dependency structure of a circuit. 

\paragraph{Interaction graphs} A natural starting point for selecting a qubit map is to analyze which pairs of qubits interact by constructing an \emph{interaction graph}.
 \cite{Javadi-abhari-17, autobraid}. An interaction graph for a circuit $\circuit$ includes a vertex for each qubit and an undirected edge $(q_i, q_j)$ for each pair such that $\cnot ~ q_i ~ q_j$ is 
a gate in $\circuit$. To capture interactions with magic states, we also include an additional vertex $v_t$ and add an 
edge $(q_i, v_t)$ whenever there is a $\T$ gate on the qubit $q_i$. 
\begin{figure}[h]
  \includegraphics[width=0.48\linewidth]{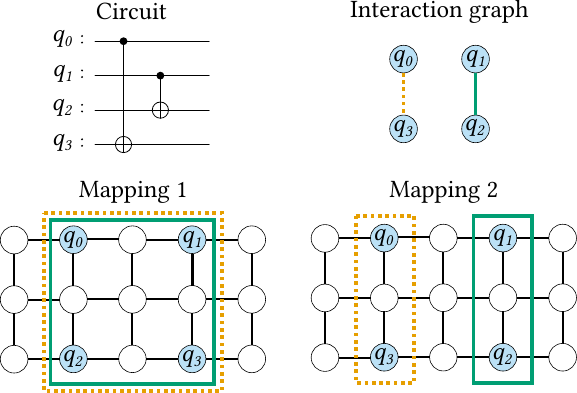}
  \caption{Interaction graphs and bounding boxes}
  \label{fig:layer-ex}
  \end{figure}
\begin{example}
\cref{fig:layer-ex} shows a circuit of 
two parallel \cnot gates. The interaction graph for this circuit (upper right) consists of two edges, one between 
$q_0$ and $q_3$ and another between $q_1$ and $q_2$.
\end{example}

\paragraph{Gate conflicts} We can 
evaluate the suitability of a map for a circuit using its interaction graph. 
For a given mapping, we can find the \emph{bounding box}
 of the interaction graph edges, following prior work \cite{autobraid}. The bounding box of an edge is defined as the smallest rectangular region on
 the architecture which contains both of the interacting qubits at its endpoints.
 We say two gates \emph{conflict} when their bounding boxes overlap because the routing
of one gate may block the routing of the other.

\begin{example}
  In the second row of \cref{fig:layer-ex}, we compare two qubit maps using the interaction graph. The bounding boxes of the two interaction graph edges with respect to each map 
  is shown in the same color and style. Following our heuristic, we prefer Mapping 2, where the gates do not conflict, to Mapping 1, where they do. 
  This is in fact the correct choice because there is a valid gate route with one step for
  Mapping 2 but not for Mapping 1.
\end{example}

\paragraph{Layered interaction graphs} However, the problem with this strategy is that it discards important
information for routing: the dependency between gates. To address this limitation, we present \emph{layered}
interaction graphs. We first partition a circuit 
into a sequence of \emph{layers} such that if a gate $g$ depends upon another $g'$, then $g$ must be assigned to a later 
layer than $g'$. Then, we build an interaction graph where each edge is labeled with the layer that the corresponding gate is assigned to.
The resulting graph with edge labels is a \emph{layered} interaction graph.
\begin{example}
A circuit based on the Quantum Fourier Transform and its layered interaction graph are
shown in \cref{fig:interactgraphs}. Notice that this circuit has 5 layers, and layer 3 is exactly the
circuit from our previous example.
\label{ex:lig}
\end{example}
\begin{figure}

  \includegraphics{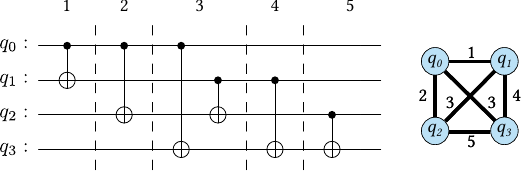}
  \caption{A circuit and its layered interaction graph}
  \label{fig:interactgraphs}
  \end{figure}

To illustrate the advantage of layered interaction graphs, in 
\cref{fig:comparing-layered}, we return to the same two
qubit maps and evaluate them with respect to the circuit and corresponding layered interaction graph from \cref{ex:lig}.

The only parallel gates in this circuit are 
the two gates in layer 3, so Mapping 2 remains a better choice because it enables routing these gates
in the same time step. Thus, an ideal cost function 
should assign lower cost to Mapping 2 than Mapping 1. Counting conflicts with respect to the standard
interaction graph instead assigns the same cost to these maps because both induce one conflict, indicated
by the yellow dashed edges. However, in the \emph{layered} interaction graph, we can distinguish between these
cases using the edge labels. If we only consider conflicts between edges with the same label, we see
one conflict under Mapping 1 in layer 3 (as in \cref{fig:layer-ex}),
and no conflicts under Mapping 2 (the edge labels 2 and 4 do not match), recovering the information that Mapping 2 is the better choice.

  \begin{figure}[h]
    \includegraphics[width=0.4\linewidth]{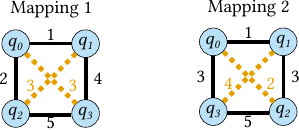}
    \caption{Qubit maps and layered interaction graph}
    \label{fig:comparing-layered}
  \end{figure}
  
  This is the key insight behind our algorithm, shown in \cref{alg:mapping}. 
  We construct a layered interaction graph (lines 2-8), then minimize \emph{conflicts} with respect to the layered interaction graph.
  Two edges are said to \emph{conflict} under a map $\map$ if 
  their bounding boxes overlap and they have the same layer label. The bounding box of an edge associated with a gate $\cnot ~q_i ~q_j$ is the bounding box of the 
  architecture vertices $\map(q_i)$ and $\map(q_j)$. The bounding box of an edge associated with a gate $\T ~ q_i$
  is the bounding box of the architecture vertices $\map(q_i)$ and the nearest magic state vertex.
  
  \begin{algorithm}
  \caption{Dependency-Aware Mapping}\label{alg:mapping}
  \begin{algorithmic}[1]
  \Procedure{\layerplace}{arch $\arch$, circuit $\circuit$, qubits $Q$}
    \State Partition $\circuit$ into layers $\mathcal{L} = L_1, \ldots, L_d$.
    \State Let $\interactgraph$ be the empty graph over the vertices $Q \cup \{v_t\}$
    \For{layer $L_k$ in $\mathcal{L}$ } 
    \For{each $\cnot ~ q_i ~q_j$  in $L_k$}
        \State Add edge $(q_i, q_j)$ to $\interactgraph$ with label $k$
    \EndFor
    \For{each $\T ~ q_i$  in $L_k$}
    \State Add edge $(q_i, v_t)$ to $\interactgraph$ with label $k$
    \EndFor
    \EndFor
   \State \Return \color{ACMDarkBlue}$\argmin_{\map} \conflicts(\interactgraph, \map)$
  \EndProcedure
  \end{algorithmic}
  \end{algorithm}

\paragraph{Solving the minimization objective}
The minimization problem (blue in \cref{alg:mapping}) can be solved via simulated annealing:
(1)  the search space $\mathcal{S}$ is the set of possible qubit maps,
(2) the neighbor set of $N(\map)$ of a map $\map$ is those obtainable 
    by swapping the locations of one pair of circuit qubits,
    and
(3) the cost function $\cost$ is the number of conflicts between edges with the same label in 
    the interaction graph.
  
\subsection{Dependency-Aware Routing}
\label{sec:routing} 
 At the routing stage, we take a circuit, architecture, and qubit map as input and return a sequence
of time steps. As formalized in the definition of a valid gate route in \cref{sec:prob_def}, 
each time step $k$ consists of (1) a set of parallel gates, corresponding to the gates $g$ for which we set $\rtime(g) = k$, and 
(2) disjoint paths along which to route them, defining the function $\rspace : \circuit \to \langle V  \rangle$ on  the same set of gates. 
The goal is to find the sequence with the fewest total time steps. The distinguishing feature of our
approach is that we define and solve a discrete optimization problem over the space of candidates for the
each time step, focusing on gates with high \emph{criticality}.

We begin by describing our algorithm for constructing a single time step
for a given layer of parallel gates, which we will iteratively apply to generate a full routing solution. 
To limit the combinatorial explosion in possible routes, we consider gates sequentially, routing each along the shortest available path, if one exists 
(otherwise the gate is delayed to a future time step). \cref{alg:one-step} shows this
subroutine for a particular sequence $\sigma$ of gates. 

With this linear approach, the order in which 
we iterate over gates affects the resulting time step because a gate earlier in the 
order may occupy vertices which prevent the routing of a later gate. Therefore, our goal is to find 
the routing order $\sigma$ that results in the best time step.

\begin{algorithm}

  \caption{Single Step Routing}
  \label{alg:one-step}
  \begin{algorithmic}[1]
    \Procedure{$\routelayer$}{gate sequence $\sigma$, arch $\arch$, $\map$}
    \State initialize $\rspace'$ to $\emptyset$
    \For{gate $g$ in $\sigma$}
      \State Let $p$ be a shortest path for routing $g$ in $\arch$  
      \State $\rspace' \gets \rspace' \cup \{(g,p)\}$ \Comment{\emph{if some  path exists}}
      \State Remove all vertices in $p$ from $A$
  \EndFor
   \State \Return $\rspace'$
  \EndProcedure
  \end{algorithmic}

\end{algorithm}

\paragraph{Shortest-first order} One approach to a routing order is to choose the next gate to be the one with the 
shortest available routing path. Intuitively, this order is often effective because shortest paths occupy fewer vertices and are thus less likely to block other paths. 
 In fact, the shortest-first order has a guaranteed approximation ratio
 derived from the literature on the \emph{node-disjoint paths} problem \cite{Kolliopoulos1997ApproximatingDP,PRXQuantum.3.020342}. 
 It results in routing $O(\textsc{opt}/\sqrt{n})$ gates where $\textsc{opt}$ is the optimal number of gates
 which can be routed in the next time step and $n$ is the number of vertices in the architecture.

\paragraph{Routing order search} While ordering the gates by path length is a sound principle, there are numerous
cases in practice where it produces suboptimal results.
\begin{example}
  Consider \cref{fig:routing_ex}. The shortest paths
available to execute the two gates are equal. With shortest-first sorting, these two
orders are equivalent. However, only one order allows routing both gates in 
the same time step. 
\end{example}
To prevent cases like this, where a fixed heuristic cannot identify an optimal
routing order, we search for the best routing order at each iteration, generating several candidate 
time steps and picking the best. To define the ``best" choice among candidate time steps, we need to estimate the degree to which a time step will lead to a high-quality overall solution, with few total time steps.

\paragraph{Dependency-aware cost}
We use a dependency-aware metric, where the cost reduction associated with routing a gate $g$ is equal to its \emph{criticality}---we want to prioritize routing important gates first.

\begin{mybox}
\textbf{Criticality:} Let $\circuit$ be a quantum circuit and $g$ be a gate in $\circuit$. 
   The \emph{criticality} of $g$, denoted $\criticality(g)$ is the depth of the subcircuit of $C$ consisting of gates which depend on $g$, including $g$ itself. 
\end{mybox}
For example, in the circuit in \cref{fig:t-gates}, both gates have a criticality of 1, while the gate $\cnot ~ q_0 ~ q_3$ in \cref{fig:map_ex} has a criticality of 3.
Intuitively, routing critical gates unlocks deeper layers of the circuit for routing.
Gates with high criticality should be prioritized to avoid a long ``tail'' of underutilized time steps as the gates on the critical
path are executed in sequence. 

Thus, we search for the routing order such that the resulting time step $s = \routelayer(\sigma, \arch, \map)$ is minimal with respect to the 
dependency-aware cost function below (highlighted in blue in \cref{alg:laroute})
$$ f(s) = \sum_{(g,p) \in s} -\criticality(g)$$

Note that this cost cannot be minimized greedily. The critical-first order, routing gates in order of criticality, is not necessarily
optimal.

\begin{example}
  \cref{fig:crit-first} depicts the front layer of a circuit with three parallel gates. Suppose the 
  \cnot gate has a criticality of 3 and the two \T gates have criticality 2. 
By first routing the most critical gate, the \cnot, we block the routing of the two \T gates. Therefore
the critical-first routing order has a cost of -3. On the other hand, if we first consider the
\T gates, both can be routed in the same time step, yielding a cost of -4. Thus, the greedy approach is not optimal here.

\end{example}
\begin{figure}
  \includegraphics[scale=1]{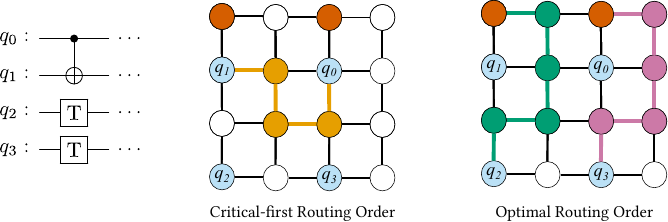}
  \caption{A case where the critical-first order is suboptimal}
  \label{fig:crit-first}
\end{figure}

Our routing algorithm is \cref{alg:laroute}. It determines the 
front layer of the circuit, applies simulated annealing to find the best choice for the next time step,
and repeats until all gates have been routed, returning a sequence of time steps. 
We instantiate simulated annealing as follows:
(1) the search space $\mathcal{S}$ is the set of orderings of the layer,
(2) the neighbor set of $N(\sigma)$ of an ordering $\sigma$ is those obtainable 
  by swapping the position of one pair of gates,
 and (3) the cost function is given by $\cost(\routelayer(\sigma, \arch, \map)).$ 

\begin{algorithm}[t] 

  \caption{Dependency-Aware Routing}\label{alg:laroute}
  \begin{algorithmic}[1]
  \Procedure{\lookaheadroute }{arch $\arch$, circuit $\circuit$, map $\map$}
    \State initialize $\rtime$ and $\rspace$ to $\emptyset$
    \State initialize number of steps $\stepcount$ to  $0$
    \While{$\circuit \neq \emptyset$} \Comment \emph{note we remove gates from $\circuit$}
        \State Let $L_1$ be the front layer of gates in $\circuit$
        \State Let $\Sigma$ be the set of all possible orderings of $L_1$
        \State $\sigma^* \gets$ {\color{ACMDarkBlue}  $\argmin_{\sigma \in \Sigma} \cost(\routelayer(\sigma, \arch, \map))$ 
        \Comment{\emph{simulated annealing}}}
        \State $\nextl\gets \routelayer(\sigma^*, \arch, \map)$
        \State $\stepcount \gets  \stepcount + 1$
        \For{each (gate, path) pair $(g,p)$ in $\nextl$}
          \State $\rtime(\mathit{g}) \gets \stepcount$
          \State $\rspace(\mathit{g}) \gets \mathit{p}$ 
          \State Remove $\mathit{g}$ from $\circuit$ 
        \EndFor
    \EndWhile
   \State \Return $\rtime, \rspace$
  \EndProcedure
  \end{algorithmic}
  \end{algorithm}

\section{A \sat-Based Optimal Baseline}
\label{sec:opt-solver}
To evaluate the quality of solutions produced by \ours, we developed a \sat-based optimal solver for the \scc problem.
Though the \sat approach does not scale to large circuits, it provides a ``ground truth" for small circuits.
Like approaches for \textsc{nisq} computers~\cite{olsq,Lin2023ScalableOL,shaik_et_al:LIPIcs.SAT.2024.26},
the optimal solver encodes the existence of a solution with $k$ time steps as a query to a boolean satisfiability solver, incrementing
$k$ if the result is ``unsatisfiable'', and iterating until a solution is found. In the rest of this section, 
we describe the
encoding for a fixed value of $k$.

Throughout we will fix
an architecture $\arch = (\vertices,\edges, \msf)$,  a circuit $C=\langle g_1, \ldots, g_k\rangle$ acting on qubits in a set $Q$ and a time limit $t_s$.
Our encoding includes three types of boolean variables: the \vmap variables represent the qubit mapping, the
\vexec variables represent the assignment of gates to time steps (the function $\rtime$) and the
\vedge variables represent the path used to route gates (the function $\rspace$). The formula over
these variables that we produce will enforce that the qubit map encoded by
\vmap and the routing encoded by \vexec and \vedge are a valid solution as defined in \cref{sec:prob_def}.
In describing these constraints, we use cardinality constraints which enforce that \emph{at most one} (\textsc{amo}) or
\emph{exactly one} (\textsc{eo}) of a set of variables is set to true. The efficient translation of cardinality constraints
into conjunctive normal form is a well-studied area \cite{Bailleux2003EfficientCE, ansotegui-sat04,Sinz2005TowardsAO,gent-modref04,prestwich-encodings} so we leave the encoding unspecified for simplicity.

\paragraph{Variables}
The boolean variables that appear in our encoding are listed below along with their intended semantics.
\begin{itemize}
  \item $\vmap(q, v)$: the qubit $q$ in the circuit is mapped to the vertex $v$.
  \item  $\vexec(g, t)$:  the gate $g$ is
  executed in time step $t$.
  \item $\vedge(u,v, g, t)$: the edge $(u,v)$ is used
  in the path for gate $g$ at time step $t$.
\end{itemize}

Next, we define the constraints over these variables that ensure a satisfying 
assignment corresponds to a mapping and routing solution.

\paragraph{Maps are injective functions} To enforce that a satisfying assignment corresponds to a valid qubit map,
we require that for each $q$ there is \emph{exactly one} $v$ where $\vmap(q, v)$ is set to true (functional consistency)
and that for each $v$ there is \emph{at most one} $q$ where $\vmap(q, v)$ is set to true (injectivity).
These constraints are shown below.

\begin{align*}
  \\ \textsc{map-valid} &\triangleq \bigwedge_{q \in \qset} \textsc{eo}(\{\vmap(q,v) : v \in \vertices \setminus \msf \} ) \land \bigwedge_{v \in \vertices \setminus \msf} \textsc{amo}(\{\vmap(q,v) : q \in \qset\} )
\end{align*}

\paragraph{Paths preserve data} 
We cannot overwrite circuit qubits or magic state qubits in the routing process. In terms of \vedge and \vmap
variables, this means that we cannot set both $\vedge(u,v, g, t)$ and $\vedge(v,w, g, t)$ for any $u$, $w$, $g$, and $t$ if 
we have set $\vmap(q, v)$ for some qubit $q$ or if $v \in MS$.

\begin{align*}
  \textsc{circ-safe} &\triangleq \bigwedge_{\substack{ u, v, w \in \vertices \\ t \leq t_s \\ q\in \qset, g \in \circuit}} \ \lnot \vmap(q, v) \lor \lnot \vedge(u,v, g, t)  \lor  \lnot \vedge(v,w, g, t) \\
  \textsc{ms-safe}   &\triangleq \bigwedge_{\substack{v \in \msf, u, w \in \vertices \\ t \leq t_s \\ g \in \circuit}} \lnot \vedge(u,v, g, t) \lor \lnot \vedge(v,w, g, t) \\
  \textsc{data-safe}   &\triangleq \textsc{circ-safe} \land \textsc{ms-safe}
\end{align*}

\paragraph{All gates are executed in logical order} To ensure that all gates are executed in a valid order, we enforce that for
each $g$, there is exactly one $t$ such that $\vexec(g, t)$ is set to true and that if $g_j$ depends on $g_i$, (which we denote with $g_i \prec g_j$ ),
then $\vexec(g_i,t)$ implies $\vexec(g_j, t')$ for some $t' > t$. This is represented by the constraints:
\begin{align*}
  \textsc{gates-ordered} & \triangleq \bigwedge_{g \in \circuit}\textsc{eo}(\{\vexec(g, t) : t \leq t_s\}) \land \bigwedge_{\substack{g \prec g' \\t' \leq t}} \lnot\vexec(g,t) \lor \lnot\vexec(g', t')
\end{align*}

\paragraph{Paths are disjoint}  Though the edges in the underlying graph are undirected,
it is useful for our formulation to assign a direction to the paths, so we assign the direction of the edge as
from $u$ to $v$ if $\vedge(u,v, g, t)$ is set to true. We require that our simultaneous paths are \emph{vertex}
disjoint, meaning each vertex can have at most one incoming edge and at most one outgoing edge. Therefore, for each time step $t$ and vertex $u$,
we include the constraint that there is
at most one pair $v, g$ such that $\vedge(u,v, g, t)$ is set to true and at most one pair $v', g'$
such that $\vedge(v',u, g', t)$ is set to true:
\begin{align*}
  \textsc{disjoint} \triangleq \bigwedge_{\substack{t \leq t_s \\ u \in V}} &\textsc{amo}(\{\vedge(u, v, g, t) : v \in V, g \in \circuit \})
  \land \textsc{amo}(\{\vedge(v', u, g', t) : v' \in \vertices, g' \in \circuit \})
\end{align*}

\paragraph{Paths connect \cnot pairs} The most complicated type of constraints are those that enforce
that the \vedge variables encode a valid path between the relevant vertices. We start with the case of
\cnot gates. The constraints effectively construct a path inductively. Let $g = \cnot$ $q_i~ q_j$ be a \cnot gate.
The path corresponding to $g$ must begin at a vertical neighbor of the control qubit $q_i$ and terminate at a horizontal neighbor of the target qubit $q_j$.
This leads to the ``base case'' constraint for each end of the path. We state them below 
using $VN(v)$ (and $HN(v)$) as an abbreviation for the set of up to two vertical (and horizontal) neighbors of a vertex $v$:
\begin{align*}
  \textsc{cnot-start} &\triangleq \bigwedge_{\substack{v\in \vertices \\ g(q_i, q_j) \in \circuit \\ t \leq t_s}}\left( \lnot \vmap(q_i, v) \lor \lnot \vexec(g, t) \lor \bigvee_{u \in VN(v)}\vedge(v, u, g, t)\right) \\
  \textsc{cnot-reach-target} &\triangleq \bigwedge_{\substack{v\in \vertices \\ g(q_i, q_j) \in \circuit \\ t \leq t_s}}\left( \lnot \vmap(q_j, v) \lor \lnot \vexec(g, t) \lor \bigvee_{u \in HN(v)}\vedge(u, v, g, t) \right)
\end{align*}

The inductive constraint is that for each edge $(u,v)$ on the path, either $u$ is a valid starting point for the path,
or an internal vertex on the path. In the latter case, it must be incident to another edge to follow back towards the starting point. In our
variables, we can express the two options as $\vedge(u, v, g, t)$ implies either $\vedge(w, u, g, t)$ for some distinct vertex $w$ adjacent to $u$ or $\vmap(q_i, u)$.

\begin{align*}
  \textsc{cnot-inductive}  &\triangleq \bigwedge_{\substack{g(q_i, q_j) \in \circuit       \\ t \leq t_s \\ (u,v)\in E}}\left( \lnot \vedge(u, v, g, t) \lor \bigvee_{\substack{(w,u) \in \edges \\ w \neq v}} \vedge(w, u, g, t) \lor \vmap(q_i, u) \right) \\
  \textsc{cnot-routed} &\triangleq \textsc{cnot-start} \land \textsc{cnot-reach-target} \land \textsc{cnot-inductive}
\end{align*}

\paragraph{Paths connect \T targets to magic states} Let $g=$ \T $q_j$ be a \T gate. We also need
to enforce that the \vedge variables assign an appropriate path between $q_j$ and a magic state qubit. This type of constraint is similar to the previous. 
The only difference is in the definition of valid end points which are now horizontal neighbors of magic state qubits.
\begin{align*}
  \textsc{t-reach-target}      & \triangleq \bigwedge_{\substack{g(q) \in \circuit \\ t \leq t_s}}\left( \textsc{eo}(\{\vedge(u, v, g, t) : v \in \msf, u \in HN(v)\}) \right)\\
  \textsc{t-routed} &\triangleq \textsc{t-start} \land \textsc{t-reach-target} \land  \textsc{t-inductive}
\end{align*}

The formula corresponding to an \scc instance is given by the conjunction of these constraints:

\begin{align*}
  \varphi(\arch, \circuit, t_s) \triangleq \textsc{map-valid} \land \textsc{gates-ordered} \land \textsc{data-safe} \land \textsc{disjoint} \land \textsc{cnot-routed} \land \textsc{t-routed} 
\end{align*}

\begin{theorem}
  The formula $\varphi(\arch, \circuit, t_s)$ is satisfiable if and only if 
  there is a map $\map$ and corresponding valid gate route $(t_s, \rtime, \rspace)$ for the \scc problem given by $\arch$ and $\circuit$, 
  and there is an explicit translation from \scc solutions $\map, (t_s, \rtime, \rspace)$ to models of $\varphi(\arch, \circuit, t_s)$ and vice-versa.
  \label{thm:enc-correct}
\end{theorem}

\section{Implementation and Evaluation} 
\label{sec:eval}
We now present our evaluation of \ours.

\paragraph{Benchmarks} Our evaluation is performed on a benchmark suite of 232 application circuits. Our suite includes the  entire set collected by \citet{zulehner2018efficient}. This existing set consists of circuits  derived from the RevLib suite \cite{wille2008revlib} and programs written in the Quipper \cite{green2013quipper} and ScaffoldCC \cite{JavadiAbhari2014} quantum programming languages.   We extended 
it with implementations of major quantum algorithms: Shor's Algorithm \cite{shor1994}, the Quantum Fourier Transform \cite{coppersmith2002qft}, Bernstein-Vazirani \cite{qcc}, QAOA \cite{farhi2014qaoa}, and Grover's Algorithm \cite{grover1996}.

\paragraph{Architectures}
We target the two architectures \squaresparse and \compact from
\cref{fig:arches}.
The \squaresparse architecture represents a case with a large space footprint studied in prior work on mapping and routing \cite{autobraid,watkins2023high}. %
It is the smallest square grid which includes enough space to surround each circuit qubit with routing ancillae on 
all sides. For a circuit with $n$ qubits, this results in a side length of $2\lceil \sqrt{n} \rceil + 1$. The \compact architecture,
in contrast, is designed to minimize the required device qubits.
This architecture is nearly linear, with three rows, and enough columns to include an ancilla qubit between each circuit qubit ($n-1$ columns for even $n$). 
 In both cases, we assume this mapping region is surrounded on all sides by magic state qubits.

 \paragraph{Experimental setup}
All runs were allotted a 1hr timeout on one core of an AMD EPYC\texttrademark ~ 7763 2.45 GHz Processor and 32GB of RAM accessed via a distributed research cluster.
Since \ours is a randomized algorithm, we collected 20 trials and plot the mean solution and a 95\% confidence interval.
We choose the simulated annealing parameters $\inittemp = 100$, $\finaltemp = 0.1/d$, $\coolrate = 0.1/d$  (where $d$ is the depth of the circuit)
for mapping and $\inittemp = 10$, $\finaltemp = 0.1$, $\coolrate = 0.1$ for routing. The constants were determined empirically 
with a grid search over the ranges $[1,10^3]$ for $\inittemp$ and $[10^{-3}, 1]$ for $\coolrate$ and $\finaltemp$,  
using randomly generated circuits to avoid benchmark overfitting.

\paragraph{Research questions} Our experimental evaluation is designed to answer the following questions:

\textbf{RQ1:}  How does our approach compare to prior work on surface code mapping and routing?

 \textbf{RQ2:} What are the gains from our  mapping and routing algorithms over greedy baselines?

 \textbf{RQ3:} How close to optimal are our solutions?
 
 \textbf{RQ4:} What is the tradeoff between optimality and scalability?

\subsection{(RQ1) Comparison to Prior Work}
 To answer RQ1, we primarily compare \ours to \autobraid \cite{autobraid}, a state-of-the-art algorithm
for a closely related problem (though other tools are represented in addressing RQ2). \autobraid targets the defect-braiding \cnot gate and assumes a  ``sparse'' architecture, like \squaresparse, where logical qubits are surrounded on all
sides by routing ancillae. For a direct comparison, we implemented a version of \autobraid with two minor modifications:
(i) constraints which require paths to include a ``bend'' as needed for lattice surgery and (ii) support for more general architectures.
Moreover, \emph{\autobraid does not handle routing of \T gates.} Therefore, for the purpose of this 
comparison only, we do not include them as part of the problem instance, as a special case of our \scc problem.
This matches the experimental setup originally used to evaluate Autobraid \cite{autobraid}.

\paragraph{Overall results}
We begin by analyzing the quality of solutions produced by Autobraid and \ours across the entire benchmark suite.
In \cref{fig:summary_bars}, we count the number of benchmarks where one algorithm outperforms the other 
in terms of solution quality by producing a \emph{mean} solution with fewer time steps. The ``Match'' category 
represents circuits for which the two algorithms reached solutions with the same number of time steps.

On the \squaresparse architecture, the largest category by far is ``Match'' where the two algorithms 
reach equivalent solutions, containing  83\% of the benchmark suite. In comparison, the ``\ours better'' category contains 16\%  of circuits and ``\autobraid better''
category contains 1\% (just three circuits). Most circuits are in the ``Match'' category because, with the plentiful routing resources of this architecture, 
both algorithms often achieve the theoretical lower-bound given by the depth of the circuit. On the other hand, the 
constraints of the \compact architecture do reveal a separation between the two algorithms. \ours
outperforms or matches \autobraid on 84\% of the benchmarks and strictly outperforms \autobraid on 59\% of the benchmarks.

\begin{figure}
  \begin{subfigure}{0.285\linewidth}
  \includegraphics[width=\linewidth]{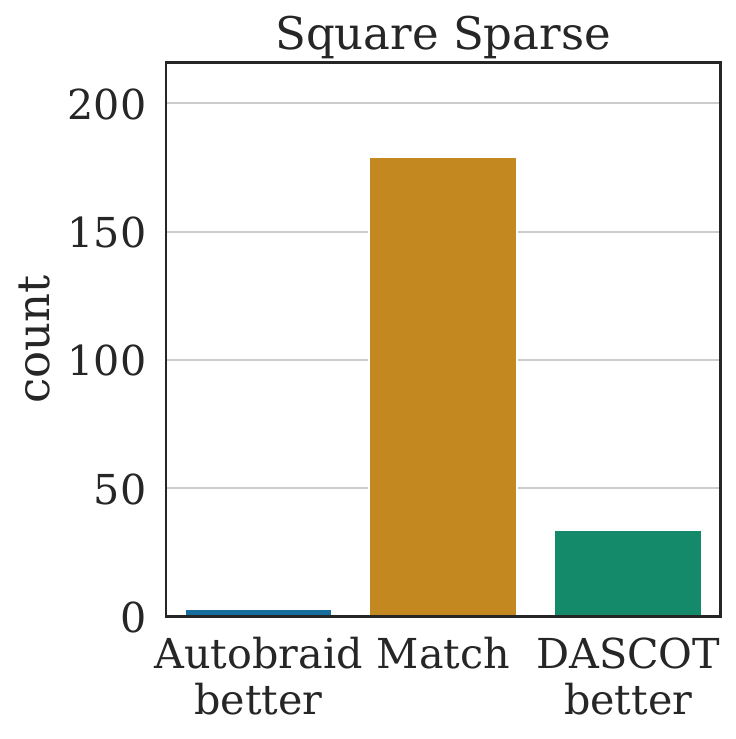}
  \end{subfigure}
  \begin{subfigure}{0.285\linewidth}
    \includegraphics[width=\linewidth]{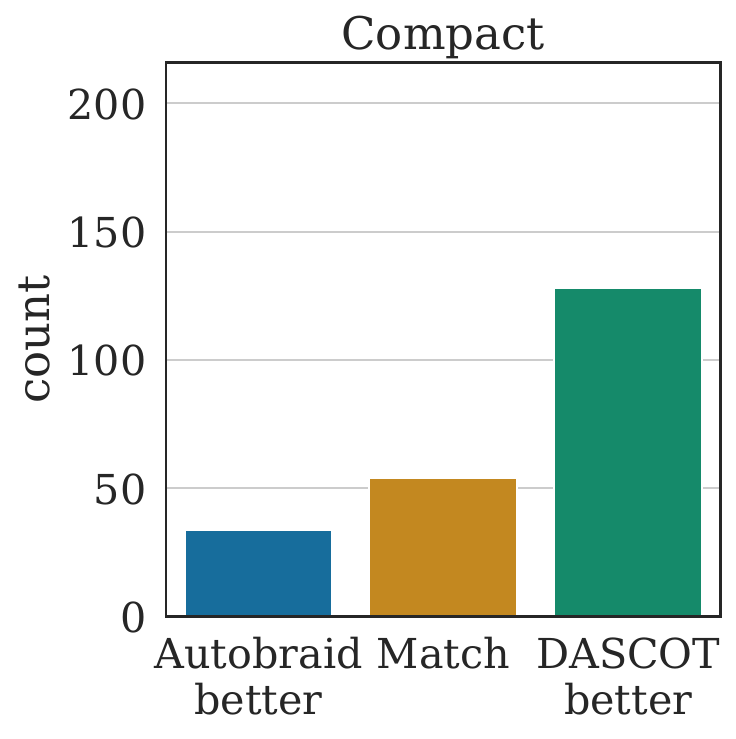}
    \end{subfigure}
  \caption{Summary Comparison with Autobraid}
  \label{fig:summary_bars}
\end{figure}

\paragraph{QFT circuits}
Quantum Fourier Transform circuits are an important application which
are particularly challenging to map and route because of their dense interaction graphs with many gates 
per layer.
However, \ours is well-suited to the complexity of QFT circuits. \cref{fig:qft} compares the \emph{cost ratios} of
the solutions provided by \ours and \autobraid for these hard instances.   
The cost ratio is defined as the quantity:
$$
\frac{\text{\# of time steps in solution}}{\text{circuit depth}}
$$
Since the depth of the circuit is a theoretical lower bound on the number of time steps in a solution,
the best cost ratio is 1. 
\ours achieves lower mean cost ratios on than \autobraid  up to about a 31\% relative improvement on both architectures
for the 100 qubit circuit.

\begin{figure}[h]
  \begin{subfigure}{0.285\linewidth}
  \includegraphics[width=\linewidth]{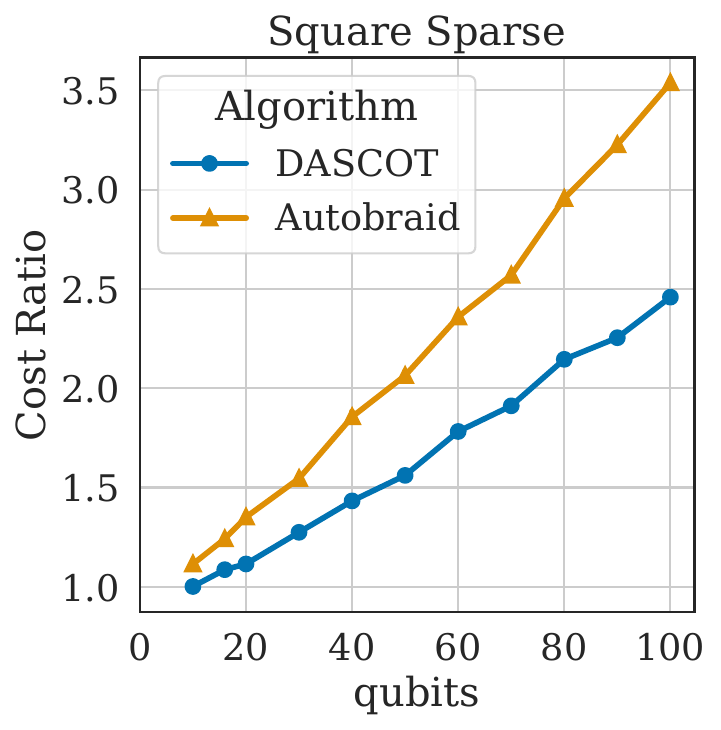}
  \end{subfigure}
  \begin{subfigure}{0.285\linewidth}
    \includegraphics[width=\linewidth]{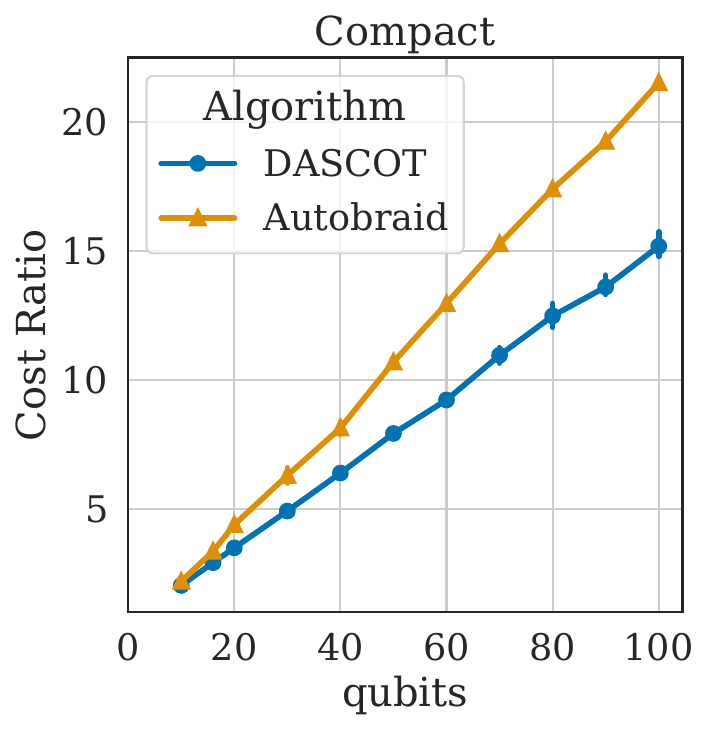}
    \end{subfigure}
  \caption{Cost ratios on QFT Circuits}
  \label{fig:qft}
\end{figure}

\paragraph{Synthetic dense circuits} We also generated a set of synthetic circuits to isolate the effect of density. To construct synthetic
circuits, we iteratively applied a random layer of gates. For the ``high-density'' circuits, a gate is applied
to each qubit in each layer. For the ``low-density'' circuits, a gate is only applied to 1/10 of the qubits
in each layer. We generated circuits with up to 100 qubits
and up to 100 layers. 

\cref{fig:rand_dense} compares the cost ratios obtained by \ours and \autobraid for synthetic 
circuits. Each point represents a circuit. Points above the black line $y=x$ are those where \ours 
produces a better solution than \autobraid. For both algorithms, low-density circuits lead to lower cost ratios than high-density. 
This is expected because there is a high probability of conflict among the many pairs of simultaneous paths in 
the maximally dense layers.
However, \ours is more robust to this complexity, outperforming \autobraid on almost all of the dense
circuits across both architectures. We see a mean reduction in cost ratio of 22\% on the \squaresparse architecture
and 13\% on the \compact architecture.

\begin{figure}
  \begin{subfigure}{0.285\linewidth}
  \includegraphics[width=\linewidth]{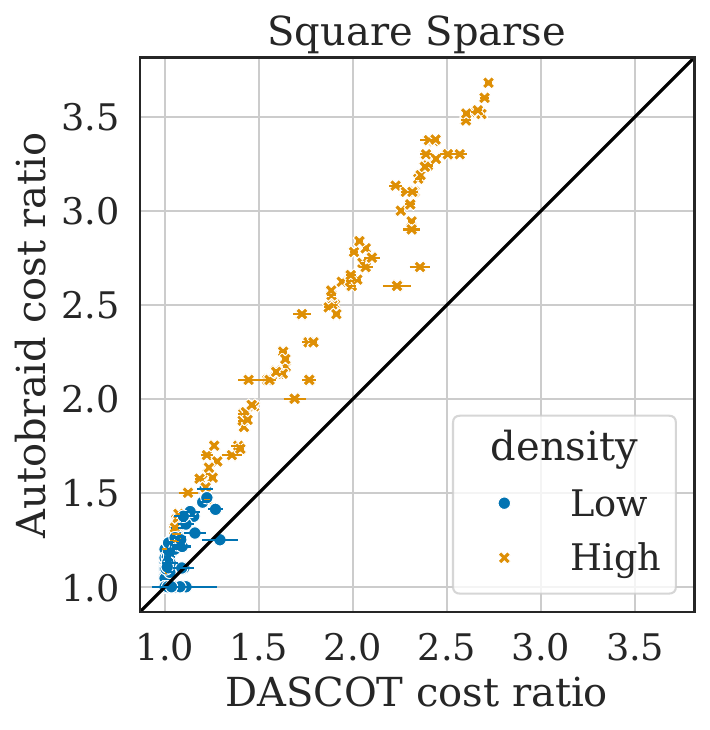}
  \end{subfigure}
  \begin{subfigure}{0.285\linewidth}
    \includegraphics[width=\linewidth]{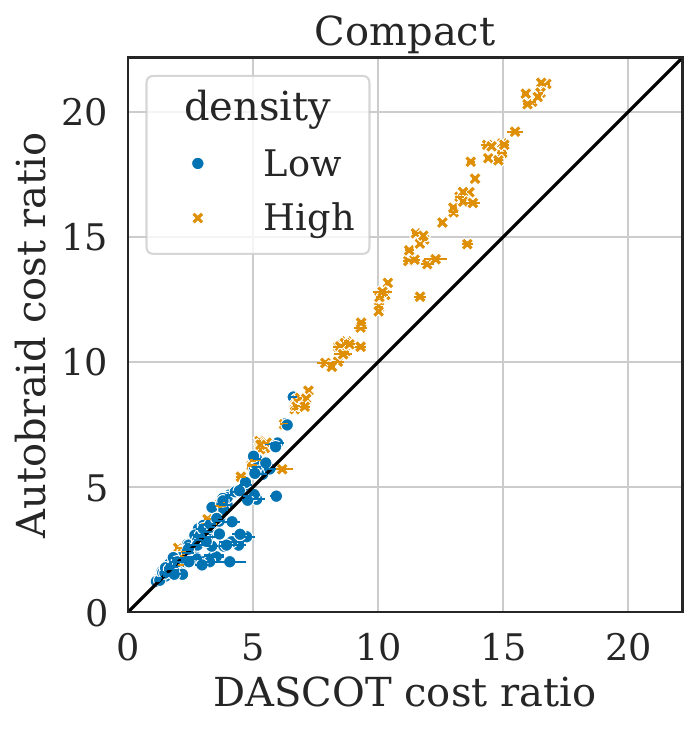}
    \end{subfigure}
  \caption{Cost ratios on random circuits of varying density}
  \label{fig:rand_dense}
\end{figure}
\begin{mybox}
  \textbf{RQ1 Summary:} \ours and \autobraid are both close to optimal on the \squaresparse architecture, while
\ours outperforms the baseline on the \compact architecture, producing a strictly better solution for 59\% of
circuits and matching the Autobraid solution in a further 25\%. 
We see especially significant gains from our
approach on dense circuits, with up to a 31\% improvement on QFT circuits.
\end{mybox}
\subsection{(RQ2) Ablation Study}

Next, we perform an ablation study to isolate the performance contributions of our mapping and routing algorithms. We focus on the \compact architecture and circuits with 8 or more qubits, since these are more difficult instances 
where a naive approach is unlikely to perform well. 

First, to assess our mapping algorithm, we compare against a version of 
our algorithm called \oursrandmap which chooses a qubit map at random, then applies the \ours routing algorithm (\cref{alg:laroute}). 
Choosing a random map is equivalent to setting the initial temperature of the simulated annealing search equal to the final temperature, performing zero iterations of the search,
or not applying a mapping optimization pass at all as in the lattice surgery compiler (LSC) \cite{watkins2023high}.

In \cref{fig:rq2}(left), we compare \ours to \oursrandmap on our application circuits in terms of cost ratio. 
For the majority of circuits (60\%), \ours produces strictly better results. For the remaining circuits, 
the search at the mapping phase does not improve the cost ratio, but there are no examples where 
it leads to a significantly worse cost ratio.

\begin{figure}
  \begin{subfigure}{0.26\linewidth}
    \centering
  \includegraphics[width=\linewidth]{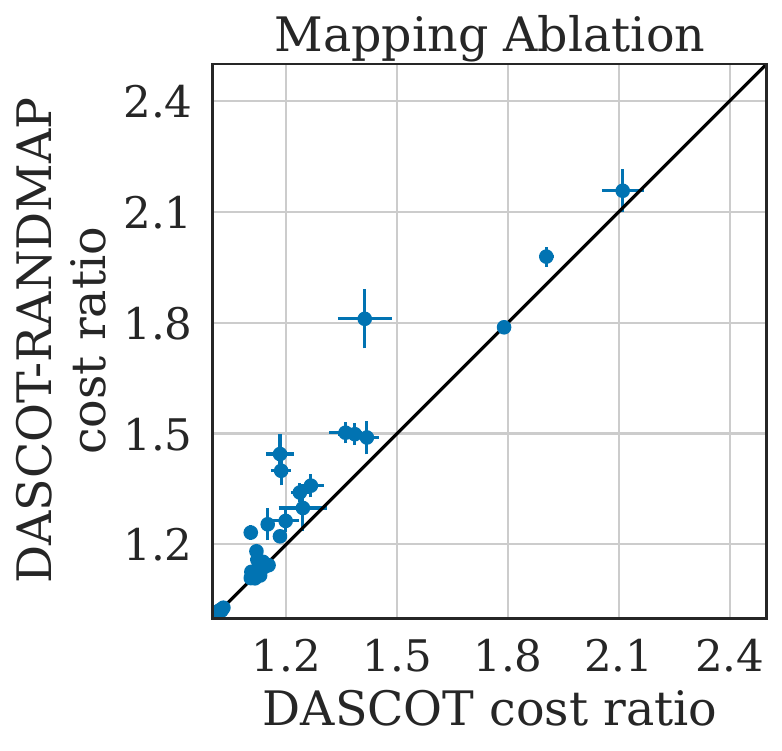}
  \end{subfigure}
  \begin{subfigure}{0.26\linewidth}
    \includegraphics[width=\linewidth]{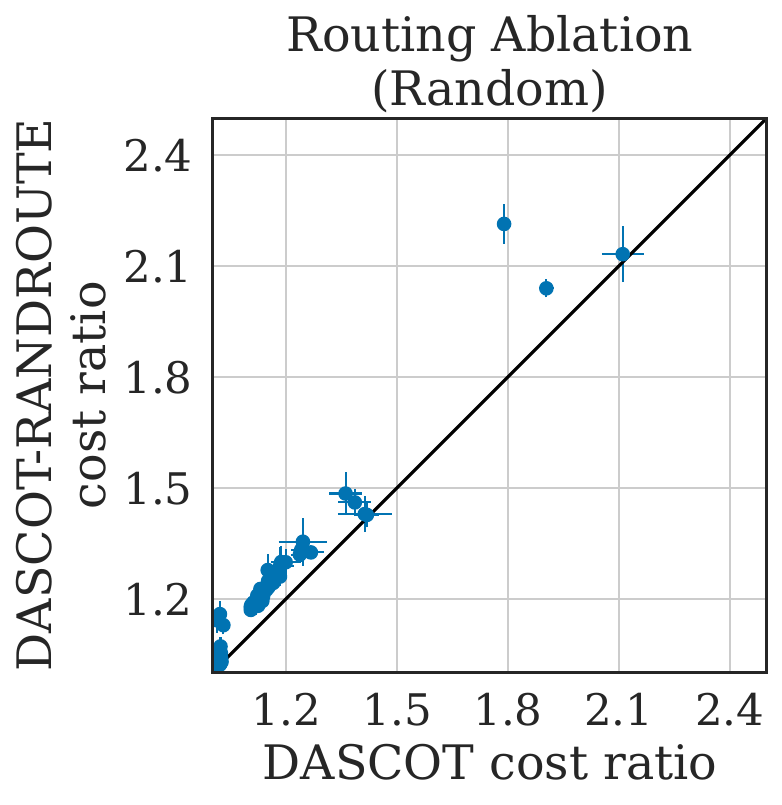}
    \end{subfigure}
    \begin{subfigure}{0.26\linewidth}
      \includegraphics[width=\linewidth]{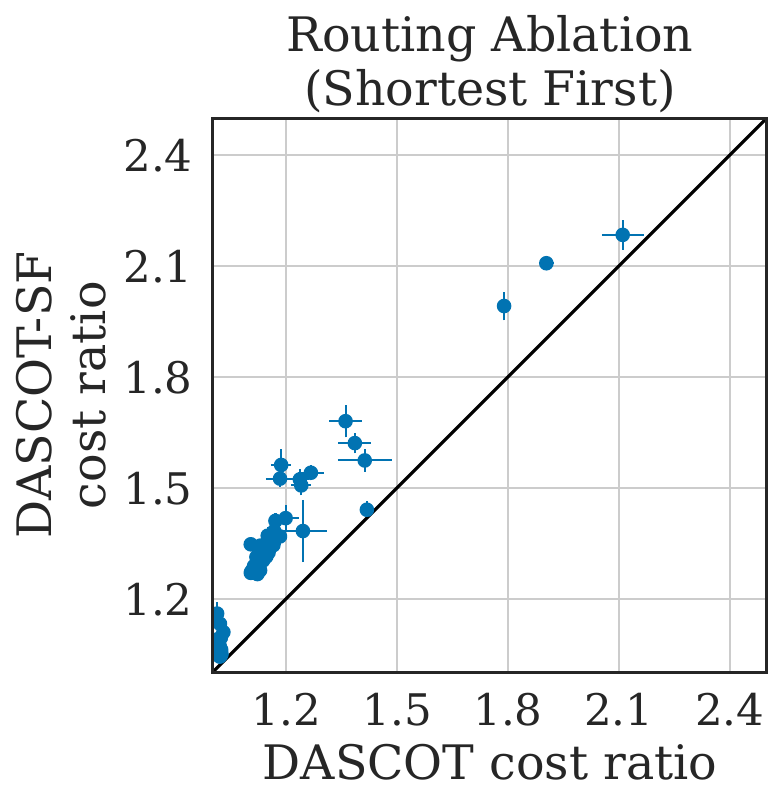}
      \end{subfigure}
  \caption{Effects of mapping and routing }
  \label{fig:rq2}
\end{figure}

We see more uniform cost reduction for the routing phase.
We analogously define the \oursrandroute algorithm which first applies the \ours mapping algorithm (\cref{alg:mapping}). Then, for routing, it replaces
the minimization procedure with a random order selection, once again corresponding to zero iterations of simulated annealing
and the procedure applied by LSC \cite{watkins2023high}. 
In \cref{fig:rq2}(middle), we compare \ours to \oursrandroute. The results indicate
that a dependency-aware analysis of routing order leads to significantly stronger solutions than committing to 
a single random order at each step. We see a strict improvement in mean cost ratio on 90\% of application circuits from applying dependency-aware routing 
for a mean relative improvement of 6\%. Similar findings hold when we compare \ours to choosing a routing 
order by sorting gates according to routing path length and criticality. We show the results for the shortest-first
routing order, which is the approach of EDPC \cite{PRXQuantum.3.020342}, in \cref{fig:rq2}(right). The plot
for the critical-first sorting order closely resembles this one and is included in \cref{sec:plots}.

\begin{mybox}
  \textbf{RQ2 Summary:} Optimization at both stages contributes to the performance of our algorithm.
  However, more consistent gains can be found at
routing phase, where optimizing routing order improves results on 90\% of circuits, and by 6\% on average.
\end{mybox}

\subsection{(RQ3) Comparison to Optimal}
We also assess how well \ours approximates the optimal solution using the baseline described in \cref{sec:opt-solver}. Our implementation
generates the constraints, then queries the \sat solver CaDiCal \cite{BiereFazekasFleuryHeisinger-SAT-Competition-2020-solvers} via the PySAT toolkit \cite{imms-sat18}.  
We focus on the \compact architecture, as \ours generally reaches the depth lower-bound on the \squaresparse architecture, so we know the solutions are 
optimal.

In \cref{fig:rq3}, we compare \ours to the optimal solution in 69 cases where we are able to obtain one within the time and memory bounds.
Circuits are sorted by their \emph{actual cost ratio}, which replaces the theoretical, architecture-independent
lower bound from the cost ratio with the true optimal solution for the \compact architecture. 
For 64 of the 69 circuits,
the mean actual cost ratio is less than 1.25, meaning the mean solution produced by \ours is within 25\% of optimal. A 
common characteristic shared by the five outlier circuits is relatively high density. Such circuits remain 
challenging for \ours despite improvements over the state of the art.
\begin{figure}
  \includegraphics[width=0.257\linewidth]{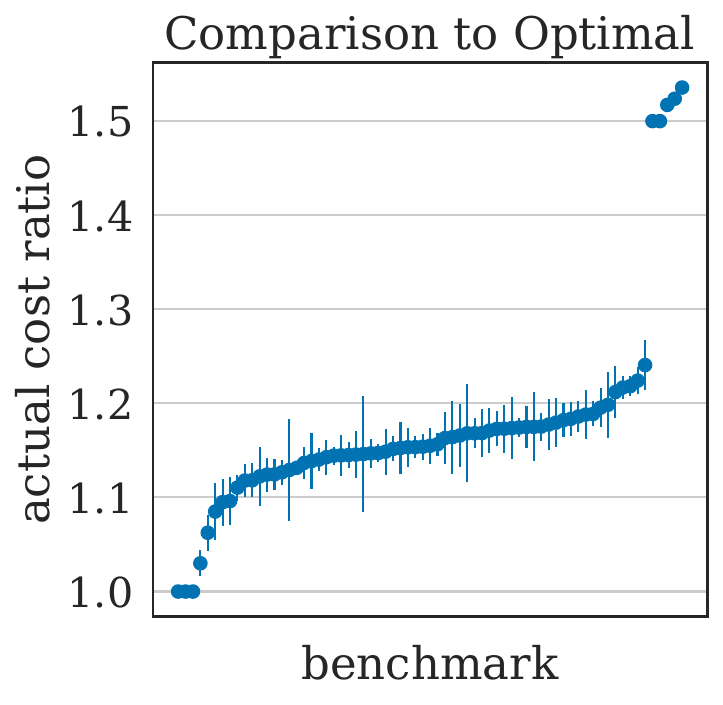}
  \caption{Optimality on the \compact architecture}
  \label{fig:rq3}
\end{figure}
\begin{mybox}
  \textbf{RQ3 Summary}: For small circuits where optimal solving is feasible, we find that \ours finds solutions
that are within 1.25x optimal for all but a few examples.
\end{mybox}

\subsection{(RQ4) Optimality-Scalability Tradeoff}
  We now study how the runtime of our algorithm scales with the complexity of the \scc problem.
  In particular, we are interested in the routing algorithm because mapping can be terminated at any point
  and our ablation study suggests most of the improvements in solution quality are found at the routing stage. 
The complexity of an \scc problem is multidimensional, depending on parameters which include
the number of qubits in the input circuit, the depth of the circuit, and the density of the topological
layers. We highlight two circuit classes with contrasting characteristics:
Bernstein-Vazirani circuits and Quantum Fourier Transform circuits.

Bernstein-Vazirani circuits have low depth and each 
layer is minimal, consisting of a single \cnot gate. Because of this structure, our approach can
route Bernstein-Vazirani circuits with thousands of qubits, as shown in \cref{fig:scaling}(left). 
On the other hand, QFT circuits of the same qubit count have many more layers, with several gates per layer.
Thus, the problem complexity grows much more rapidly with respect to qubit count (see \cref{fig:scaling}(right)), such that the largest 
circuit routed within an hour has 170 qubits on the \compact architecture and 200 qubits on the \squaresparse architecture.

\begin{figure}[h]
  \begin{subfigure}{0.295\linewidth}
  \centering
  \includegraphics[width=\linewidth]{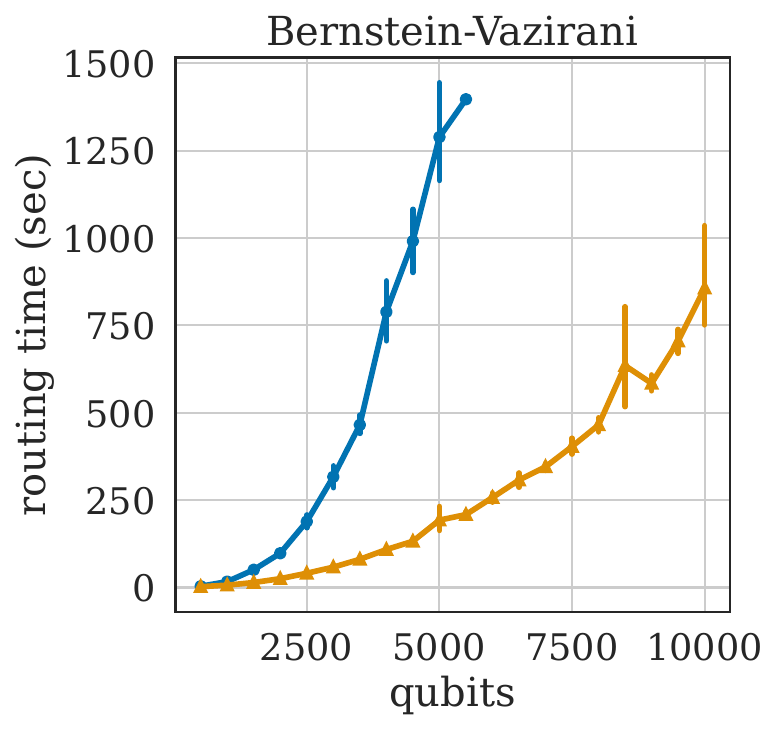}
  \end{subfigure}
  \begin{subfigure}{0.285\linewidth}
    \centering
    \includegraphics[width=\linewidth]{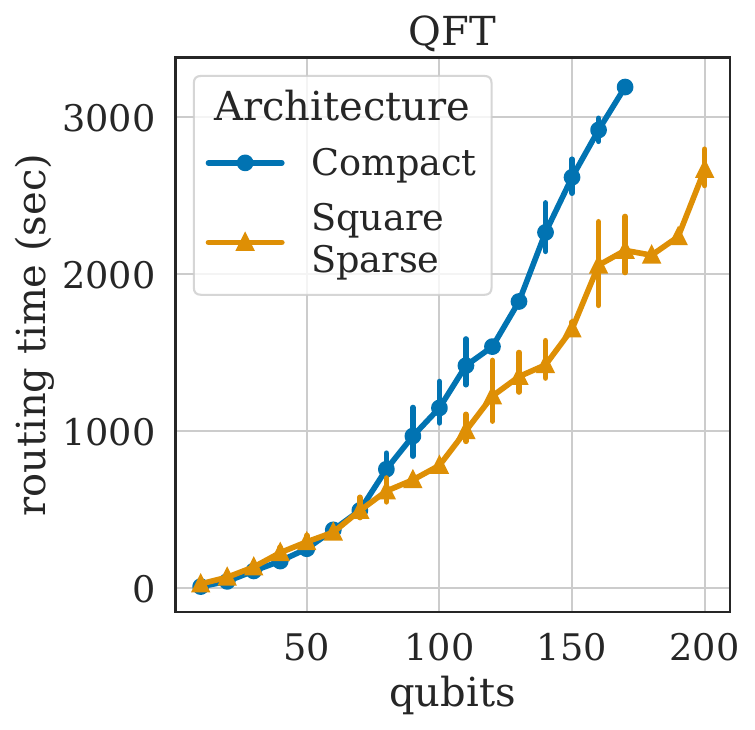}
    \end{subfigure}
  \caption{Runtime scaling of \ours}
  \label{fig:scaling}
\end{figure}

However, our approach is flexible and can terminate more quickly by searching for fewer iterations.
In \cref{fig:tradeoffs}, we compare the solution quality and runtime of three routing algorithms: \ours and \oursrandroute from
our ablation study (the latter is abbreviated to \textsc{rand} in the legend) and an intermediate configuration, \ourslimited, which performs 1/2 as many search iterations as \ours.
Benchmarks are sorted along the $x$-axis in ascending order with respect to \ourslimited.
We see that \ourslimited finds solutions with cost ratios near the midpoint between \oursrandroute (2\% better than
\oursrandroute and 3\% worse that \ours on average) while terminating 41\% faster than \ours on average.
\begin{figure}[h]
  \begin{subfigure}{0.285\linewidth}
  \centering
  \includegraphics[width=\linewidth]{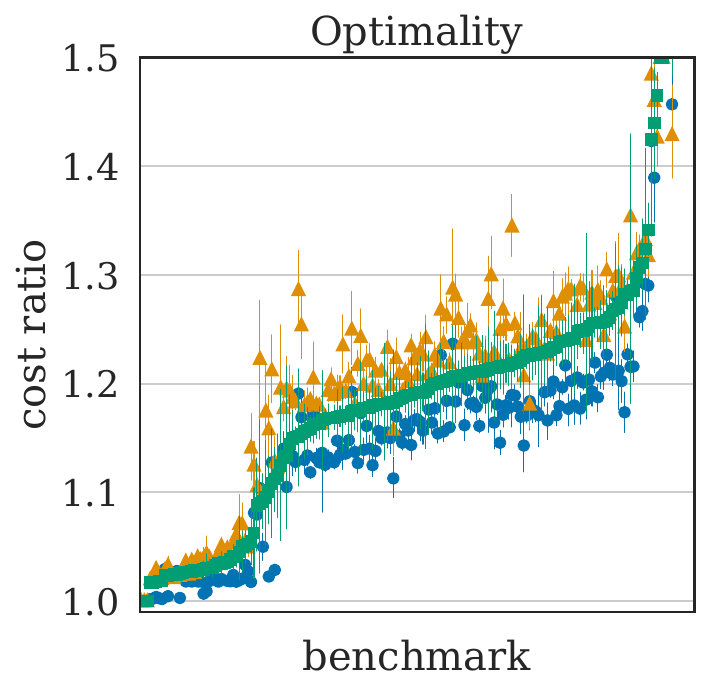}
  \end{subfigure}
  \begin{subfigure}{0.3\linewidth}
    \centering
    \includegraphics[width=\linewidth]{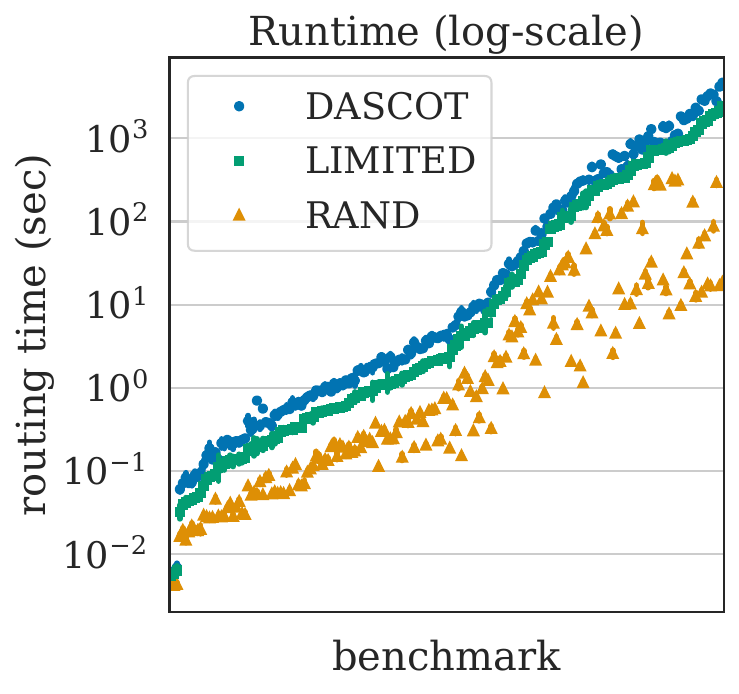}
    \end{subfigure}
    \caption{Trading compute for optimality}
  \label{fig:tradeoffs}
\end{figure}

\begin{mybox}
  \textbf{RQ4 Summary:} \ours can be applied to Bernstein-Vazirani circuits with up to 10,000 qubits and 
  QFT circuits up to 200. Additionally, applying a limited version of \ours decreases run time by 41\%, incurring
  a only a 3\% cost in terms of solution quality.
\end{mybox}

\section{Discussion and Related Work}

\paragraph{Other allocation and disjoint paths problems}
Surface code mapping and routing is an assignment of program variables to limited physical resources such that the runtime efficiency is maximized, much like the classical problem of register allocation \cite{COOPER2023663,chaitinregalloc,polettoregalloc}. Disjoint paths problems like the constraints on a valid gate route have also been studied in other contexts
including VLSI routing \cite{wireRouting, ndpvlsiaggarwal} and all-optical networks \cite{Yonatan-yuval-Optical-routing,effallopt,effalloptagg}.
Inspired by these applications, the study of approximation algorithms for disjoint path problems is an active area with both hardness results \cite{1530717,DBLP:series/lncs/Erlebach06} and approximation algorithms on planar and grid graphs \cite{10.1145/2897518.2897538, chuzhoy_et_al:LIPIcs:2015:5303}. However, a distinguishing feature of this particular  disjoint path problem
is the dependency between paths,  forcing the routing of certain pairs in a particular order.

\paragraph{NISQ qubit mapping and routing}
    A wide array of  work has targeted the qubit mapping and \cnot routing problem for noisy intermediate-scale quantum (NISQ) computers
 without error-correction. In the NISQ setting, \cnot gates are only executable between qubits which are mapped to adjacent vertices on the architecture graph.
 NISQ mapping and routing tools use \swap gates to update the qubit map, moving qubits to adjacent locations.   
The constraints of \scc are fundamentally different: we can perform two qubit gates between any pair
of qubits in constant time as long as there is an appropriate routing path available. Nevertheless, we 
can leverage certain insights developed in the NISQ setting. For example, our layered interaction graph approach for mapping is an extension of NISQ algorithms those that construct qubit maps based on interaction graphs \cite{siraichi2019qubit, cowtan2019qubit}. 
Also, our \sat encoding for the optimal baseline shares structure with encodings of the NISQ problem in \sat, \textsc{smt}, or \textsc{maxsat} \cite{Wille2019MappingQC,olsq,Lin2023ScalableOL,Molavi2022QubitMA,satmapper,shaik_et_al:LIPIcs.SAT.2024.26}, though the constraints describing valid
paths are unique to \scc and thus to our encoding.

\paragraph{Defect-braid routing} Some recent work in compilation for surface code quantum
devices has addressed similar problems to the surface code mapping and routing problem. \citet{Javadi-abhari-17}
identifies \cnot contention as an important architectural design factor and develops
 heuristics for mapping qubits and scheduling \cnot gates implemented as defect braids.
 \autobraid \cite{autobraid} builds on this work with a new algorithm for mapping that 
 minimizes the number of large groups of \cnot gates with overlapping bounding boxes and a stack-based routing algorithm. However, neither considers \T gates,
 multiple architectures, or the constraints of lattice surgery \cnot gates.
 
Autobraid includes a procedure for inserting \textsc{swap} operations
 to shuffle the qubit map when a small proportion of front layer of gates can be routed in the same time step.
 In principle, such shuffling can be incorporated into \ours. However, \swap gates are expensive,
 decomposing into three \cnot gates of alternating orientation (the control of one \cnot is the target of the next). Because \cnot gates of opposite orientation must be routed along different paths, \swap gates occupy many vertices and are
 in parallel with other gates. Thus, transitioning to a new mapping often has a net negative effect as the added steps for \swap gates outweighs the benefit of the new mapping.
 
 \paragraph{Lattice surgery compilation} For the lattice surgery setting, LSC \cite{watkins2023high} provides a scalable framework for translation of quantum 
gates to lattice surgery operations, but does not optimize the mapping or routing solution.
EDPC \cite{PRXQuantum.3.020342} considers a shortest-first routing algorithm, but assumes a given fixed map 
 that places qubits in a ``sparse'' structure.  None of the above work provides an optimal solver for 
 the compilation problem they address. \citet{Lao_2018} does solve a mapping and routing problem optimally with integer linear programming, but this formulation does not permit long-range \cnot gates. 
 LaSsynth \cite{Tan:2024cfn} uses a \sat solver to optimize lattice surgery representations of small subroutines (5-20 qubits and 10-100 operations) 
at a lower level of abstraction.
 
 An alternative proposed compilation pipeline serializes a circuit to a sequence of Pauli product rotations \cite{Litinski2018AGO,Beverland:2022rpv}.
 In this setting, TopQAD \cite{mqlss} applies an algorithm similar to LSC: performing no mapping optimization and routing gates in a random order.
 The only difference between this greedy routing algorithm and \oursrandroute is in the specifics of the routed operations. Pauli product rotations are implemented via routing trees with an interacting qubit at each leaf, rather than routing paths with an interacting qubit at each endpoint.
 However, sequential Pauli compilation can lead to prohibitively high run times, converting a circuit with $k$ parallel operations to a compiled  
 circuit of depth $k$ \cite{PRXQuantum.3.020342}.

\section{Conclusions}
In this paper, we have tackled the surface code mapping and routing problem, a critical
problem in compilation for emerging practical quantum computers. We developed an algorithm to deliver high-quality solutions in reasonable
time via simulated annealing and dependency-awareness. In future work, we plan to extend our model and algorithms to capture a broader class of quantum architectures.
For example, we could support recent designs for heterogeneous multi-chip fault-tolerant quantum architectures \cite{chiplet,hetarch}. 
The mapping and routing problem in this case is nuanced because not all \cnot gates are equivalent. A cross-chip \cnot gate is possible, but more costly.
\section*{Data-Availability Statement}
The software and benchmark suite that supports our evaluation in \cref{sec:eval} is available in a public archive \cite{artifact}. This software artifact includes a Docker image packaging the source code for \ours and the baseline algorithms,
along with scripts for reproducing our empirical results and generating plots matching those included in the paper.

\section*{Acknowledgments}
We thank the anonymous reviewers for their insightful feedback and suggestions. We are grateful to the CHTC team for 
maintaining the computing resources which enabled our empirical evaluation \cite{chtc}.
This work is supported by \textsc{nsf} grants \#1652140 and \#2212232 and awards from Meta and Amazon. This research is also partially supported by the OVCRGE at the University of Wisconsin-Madison with funding from the Wisconsin Alumni Research Foundation.

\bibliographystyle{ACM-Reference-Format}
\bibliography{references}
\newpage
\appendix
\section{Hardness of Surface Code Mapping and Routing}
\label{sec:hardness}
In this section, we prove that surface code mapping and routing is $\np$-complete via a reduction from
a multiprocessor scheduling problem.
To that end, we state the decision version of the \scc problem below.
\begin{mybox}
\begin{definition}[\scc Decision Problem]\ \\
  \indent Given: An architecture $\arch$ and circuit $\circuit$ and time limit $t_s$  \\ 
  \indent Question: Is there a qubit map $\map$ and valid gate route ($t$, $\rspace$, $\rtime$) for $\arch$, $\circuit$, and $\map$ with $t \leq t_s$?
\end{definition}
\end{mybox}

\begin{theorem}\label{thm:scmr_np_complete}
  The \scc Decision Problem is \np-complete.
\end{theorem}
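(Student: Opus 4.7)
The plan is to establish membership in \np and then prove hardness by reducing from a classical scheduling problem. Membership is the easy direction: a certificate is the triple $(\map, \rspace, \rtime)$ together with a value $t \le t_s$. Each component has size polynomial in $|\circuit| + |\vertices|$, since $\rspace$ assigns at most $|\circuit|$ simple paths, each of length at most $|\vertices|$. Every clause of \cref{valid_gr_def}---injectivity of $\map$, path-shape conditions for \cnot and \T gates, data preservation, logical order on dependent gates, and pairwise path disjointness within each time step---can be verified directly in polynomial time.

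For hardness, I would reduce from Ullman's \np-complete problem of scheduling unit-time jobs on $m$ identical processors subject to a precedence DAG (often written $P \mid \mathrm{prec}, p_j = 1 \mid C_{\max}$). Given such an instance with jobs $\jobset$, partial order $\prec$, processor count $m$, and deadline $T$, I would construct the following \scc instance. Use a narrow architecture: an $m$-column strip of data-qubit slots interleaved with a single column of routing space, plus a bank of magic-state locations placed so that every feasible gate route must occupy one of $m$ disjoint ``routing corridors.'' For each job $\job \in \jobset$, introduce one \cnot gate between a fresh pair of qubits dedicated to that job. For every precedence $\job_1 \prec \job_2$, chain the two corresponding gates by making them share a circuit qubit; by the Logical Order clause of \cref{valid_gr_def}, this forces $\rtime(g_{\job_1}) < \rtime(g_{\job_2})$.

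The forward direction of correctness is straightforward: given a valid schedule of makespan $T$, assign each gate to the time step of its corresponding job and route it through the corridor indexed by its processor assignment. The reverse direction is where I expect the real difficulty to lie, namely showing that every valid \scc solution induces a legal schedule with makespan no more than the number of time steps used. \scc permits substantial flexibility in the qubit map and in path shapes, so an adversarial solver might attempt to pack more than $m$ gates into a single time step via an unexpected layout. To control this, I would pad the construction with enough forced structure---additional magic-state positions and ``filler'' qubits whose placement is tightly constrained---so that any injective $\map$ is essentially forced into the intended layout. A counting argument on vertex-disjoint paths through the strip then caps the number of simultaneous \cnot routes at $m$, which, combined with the qubit-sharing argument for precedence, exactly recovers a legal multiprocessor schedule. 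Since the construction is polynomial in the size of the scheduling instance, this reduction completes the proof of \np-hardness.
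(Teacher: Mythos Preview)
Your membership argument is fine and matches the paper. The hardness sketch, however, has two concrete gaps that the paper's construction is specifically engineered to close.

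First, encoding precedences by making the two \cnot gates ``share a circuit qubit'' cannot reproduce an arbitrary DAG. A \cnot touches only two qubits, so a job with three or more Hasse neighbors forces at least two of the neighboring gates onto the \emph{same} qubit of $g_\job$. By the Logical Order clause this creates a spurious ordering between those neighbors that was not present in the \psp instance, so a yes-instance can become a no-instance of \scc. The paper handles exactly this obstacle (its ``Challenge~1'') by replacing each job with a \emph{job gadget}: a single \T gate bracketed by $d$ fan-in and $d$ fan-out \cnot{}s on $d$ dedicated I/O qubits, where $d$ is the maximum degree in the Hasse diagram. This lets every dependency use a distinct I/O qubit, avoiding accidental orderings.

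Second, your capacity bound relies on forcing every injective map into a fixed ``corridor'' layout via filler qubits, then counting vertex-disjoint paths through a strip. In \scc the solver chooses $\map$ freely over all of $\vertices\setminus\msf$, and paths are not confined to corridors, so it is far from clear this can be made rigorous without a construction of its own. The paper sidesteps the mapping freedom entirely: the capacity-bearing operation is the \T gate, and the architecture has exactly $k$ magic-state vertices, so at most $k$ \T gates can execute in any step \emph{regardless} of how qubits are mapped or paths are routed. This choice in turn causes time dilation (each job gadget has depth $2d{+}1$, plus inter-gadget \cnot{}s), which opens the door to staggered executions that beat the intended bound; the paper closes that door with a separate \emph{cycle circuit} that saturates all $k$ magic-state qubits except at $t_p$ designated steps. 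Your one-gate-per-job plan avoids dilation only if the degree issue is ignored; once you fix the degree issue you will face the same dilation/staggering problem and need an analogous synchronizing device.
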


\subsection{Overview of the Proof of \cref{thm:scmr_np_complete}}
Clearly, \scc is in $\np$, with a polynomial length certificate consisting of a description of the qubit map and a sequence of time steps representing a valid gate route (alternatively, our poly-sized \sat encoding in \cref{sec:opt-solver} is a constructive proof). The hardness proof is a reduction from the $k$-processor scheduling problem, shown to be \np-complete by  \citet{ULLMAN1975384} and stated below.

\begin{mybox}
\begin{definition}[Processor Scheduling Problem]\ \\
  \indent Given: a finite partially ordered set of jobs $(\jobset, \prec)$, a number of processors $k$, and a time limit $t_p$  \\
  \indent Question: Does there exist a schedule $\sched$ from $\jobset$ to the set $\{1, \ldots, t_p\}$ such that
  \begin{itemize}
    \item if $\job \prec \job'$  then $\sched(j) < \sched(j')$;
    \item for each $1\leq i \leq t_p$, the subset of $\jobset$ mapped to $i$ by $\sched$ has cardinality at most $k$?
  \end{itemize}
\end{definition}
\end{mybox}

We will abbreviate ``Processor Scheduling Problem'' as \psp and
denote an instance with a job set $(\jobset, \prec)$, $k$ processors, and time limit $t_p$ by
$\psp(\jobset, k, t_p)$ (leaving the order implied). 
\psp is a natural choice for this reduction since it also involves
executing tasks with dependencies as a sequence of time steps, but without the space constraints of \scc. 
Just as a job occupies a processor for a time step, 
a \T gate occupies a magic state qubit for a time step. Therefore the overall strategy of the proof is to encode the dependency structure of a \psp instance into the \T gates of a circuit, and
include the same number of magic state qubits as processors to match the simultaneous capacity. Before providing the general proof, we will walk through this process with the example job set shown in \cref{fig:hard_psp}.
This instance consists of four jobs: one has no
prerequisites and the other three depend solely on it. 
Our goal is the following:

\begin{center}
  \emph{Associate each job in a \psp instance with a \T gate, such that the dependency structure of the resulting circuit 
is exactly that of the \psp instance.} 
\end{center}

There are three roadblocks along this path which motivate the three major components of the reduction.

\subsubsection*{Challenge 1 (High-degree jobs):} Since quantum gates act on at most two qubits, we cannot directly encode jobs 
         with more than two dependencies, such as Job A in our example set.

 \emph{Solution}:  Represent each job with a subcircuit called a job gadget containing exactly one \T gate 
 and several \cnot gates. The job gadget representing the jobs in our example is shown in \cref{fig:job_gadget_circ}.
The \T gate of the job gadget is applied to the first qubit. The other qubits are used to induce dependencies
between \T gates. In our example, the maximum number of incoming or outgoing dependencies for a job is 
three, so we have three of these additional ``input/output'' (I/O) qubits. Notice that if we construct a circuit where a job gadget is preceded
by a gate $g$ on one of its I/O qubits, then the \T gate depends on $g$ transitively via one of the first three \cnot gates in the job gadget. Likewise,
in a circuit where a job gadget is followed by a gate $g$ on one of its I/O qubits, $g$ depends on the \T gate through one of the last three \cnot gates in the job gadget. 
We use this property of the job gadget to induce the appropriate dependencies between job gadgets in a \emph{dependency circuit}. In the dependency circuit,
we include a \cnot between I/O qubits of job gadgets for each direct dependency of jobs. Parallel dependencies (like A to B and A to C in our example) use different I/O qubits.
The full dependency circuit corresponding to this example is in \cref{fig:dep_circ_itself}.

\begin{figure}
  \begin{tcbitemize}[raster equal height=rows,
  raster every box/.style=blankest]
  \hspace{1em}
  \tcbitem
      \begin{tcbitemize}[raster columns=1]
          \tcbitem \centering \subcaptionbox{\psp instance\label{fig:hard_psp}}{ \begin{tikzpicture}[main/.style = {draw, circle}, edge/.style={->,> = latex'}, node distance=2cm]
            \node[main] (1) {A};
            \node[main] (2) [below left of=1] {B};
            \node[main] (3) [below of=1] {C};
            \node[main] (4) [below right of=1] {D};
            \draw[edge] (1) -> (2);
            \draw[edge] (1) -- (3);
            \draw[edge] (1) -- (4);
          \end{tikzpicture}}
          \vfill
          \tcbitem \subcaptionbox{Job gadget\label{fig:job_gadget_circ}}{\includegraphics[width=\linewidth]{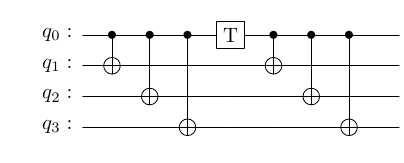}}
        \end{tcbitemize}
      \tcbitem
      \hspace{2em}
      \subcaptionbox{Dependency circuit\label{fig:dep_circ_itself}}{\includegraphics[width=0.615\linewidth]{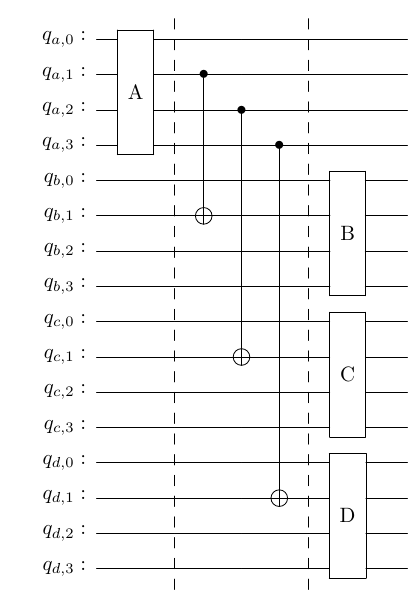}}
  \end{tcbitemize}
  \caption{Constructing the dependency circuit for our running example. Each rectangle in the dependency
  circuit shown in (c) is one copy of the circuit in (b)}
  \label{fig:dep_circ_full}
\end{figure}
\subsubsection*{Challenge 2 (Time Dilation):} We need a longer time limit for our \scc instance than the original \psp instance. 
Consider the instance consisting of our running job set, $k=3$ processors, and the time limit $t_p=2$. Clearly, the answer is ``yes'': execute
job A in step 1 and the other three jobs in step 2. Two steps is not sufficient for executing the dependency circuit because we have ``stretched''
each job into several gates and added ``transition'' \cnot gates between job gadgets.

\emph{Solution:}
Choose an extended time limit for the \scc instance with both a stretching and transition term. In our case, each job gadget includes
a chain of 7 dependent gates, so the stretching term is $7t_p$. Since the maximum number of
dependencies of a job is 3 and we have $k$ independent processors, we need to be able to execute up to $3k$ transition \cnot gates between jobs. 
This yields a transition term of $3k(t_p-1)$, for a total time limit of $t_s = 7t_p + 3k(t_p-1)$.

\subsubsection*{Challenge 3 (Staggered Execution):} When we choose a time limit that accounts for the stretching and transitions, we over-correct and introduce spurious
solutions. Using the same set of jobs and time limit of $t_p=2$, the answer is ``no'' with only $k=2$
processors. The formula in the previous paragraph yields $t_s = 20$ time steps to execute the dependency circuit. The dependency circuit \emph{can} be executed within 20 steps, even with a magic state capacity limit of 2.
This can be done by delaying the execution of job gadget D by one time step. Then, only
two \T gates are executed simultaneously (for B and C), and the job set is executed in 18 total steps.
Such a solution breaks the correspondence between \psp and \scc.

\emph{Solution}: 
Eliminate the unwanted solutions by adding an independent circuit called the \emph{cycle circuit}. 
The cycle circuit applies $t_s$ gates to each qubit, so it can be executed in $t_s$ steps only when each gate is executed as soon as theoretically possible,
at the time step equal to its depth in the circuit. This is useful because it allows us to restrict the execution of
\T gates in the dependency circuit to specific time steps. The cycle circuit  
occupies and releases the magic state qubits in a repeating cycle. The number of cycles is equal to the \psp time limit $t_p$.
The \T gates from the dependency circuit are executable only in the steps that lie at the center of a cycle (where the cycle circuit executes a \cnot), and there are $t_p$ such steps.
Now, it is no longer possible to delay the execution of job gadget $D$ by one time step because no magic state qubits are available.

\cref{fig:with_cycle_circuit} shows the dependency circuit extended with the cycle circuit. 
One pair of qubits is shown, corresponding to a \psp instance with a single processor. To encode $k$ processors, 
we include $k$ copies of the two-qubit circuit.

\begin{figure}[h!]
  \includegraphics[width=0.6\linewidth]{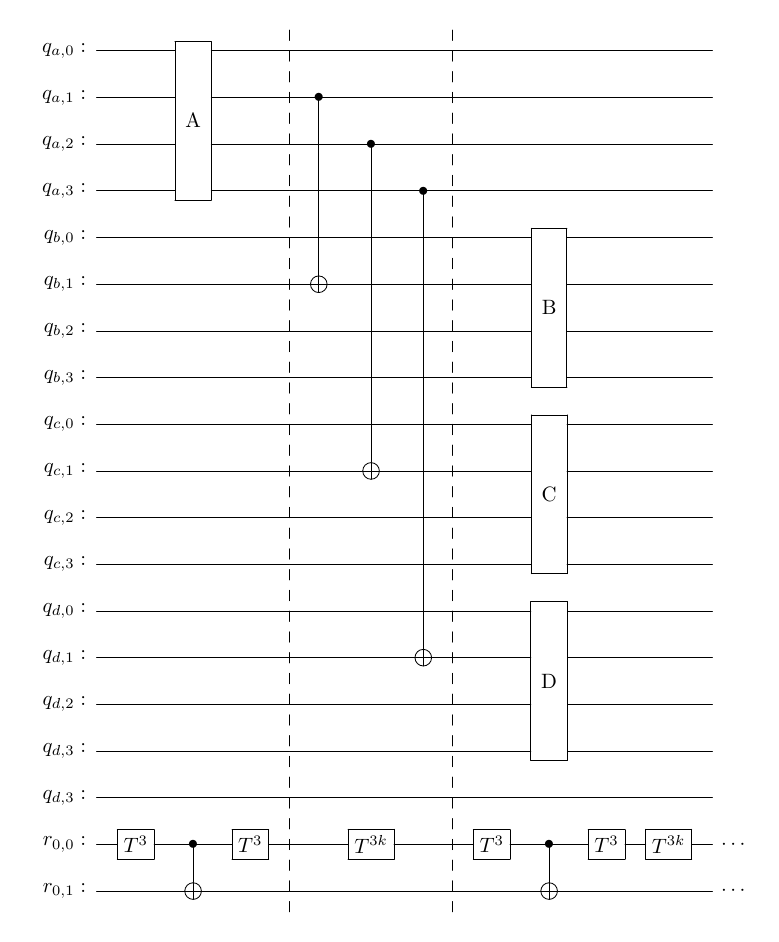}
  \caption{Full circuit in our reduction. The notation $T^x$ means $x$ applications of the \T gate}
  \label{fig:with_cycle_circuit}
\end{figure}

\subsubsection*{The architecture}
Since the $\T$ gates in the dependency circuit correspond to jobs, we design our target architecture to match the capacity
of a \psp instance with $k$ processors by including $k$ magic state qubits, so that $k$ $\T$ gates can
be executed simultaneously. Other than this capacity limit, we do not want to introduce any additional
limitations. If two jobs can be assigned to the different processors at the same time step,
their job gadgets should be simultaneously executable. 
We ensure this property by organizing our architecture into $k$ ``processor units.'' Each processor unit contains one magic state qubit.
 We can connect processor units into a connected grid arbitrarily. For simplicity, we choose to chain them 
linearly, forming a ``long'' architecture with 4 rows and many columns. 

\subsubsection*{Summary} Given a \psp instance $(\jobset, k, t_p)$, we can construct an equivalent \scc instance $(\arch, \circuit, t_s)$
where $\arch$ consists of $k$ processor units, $\circuit$ is the union of the dependency circuit and the cycle circuit, and $t_s$ is 
computed as described above from the number of processors and the maximum number of dependencies of a job. We can translate solutions from one instance to another by executing a job in step $i$ on processor $j$ if and only if the
corresponding job gadget is executed during cycle $i$ of the cycle circuit and the job circuit qubits are mapped to processor unit $j$. 
The rest of this section elaborates on each of these proof steps.

\subsection{The Dependency Circuit}
We begin with the \emph{dependency circuit}: a structure that encodes the partial order on jobs as a circuit.
This is nontrivial because the dependency structure allowed in a \psp instance is more general that
that of quantum circuits. 

To simulate an arbitrary dependency structure under this limitation,
we use a construction called a \emph{job gadget}. 
The job gadget is designed such that the \T gate qubit inherits the incoming and outgoing dependencies
of all of the other qubits. In general, the job gadget for a \psp instance $\jobset$ includes $d+1$
qubits, where $d$ is the maximum degree of any job in the Hasse diagram of $\jobset$.
One of these qubits is used to execute a $\T$ gate that represents the job itself, and the rest
are used to introduce dependencies with other job gadgets.
Let $q_{\job, {i}}$ denote the $i$th qubit in the job gadget for $\job$. The job gadget consists of the gates
\cnot $q_{\job, 0}$ $q_{\job, i}$ for each $1 \leq i \leq d$, then the gate \T $q_{\job, 0}$, and finally a repetition of the gates \cnot $q_{\job, 0}$ $q_{\job,i}$
for each $1 \leq i \leq d$.

For each job $\job$ we assign an indexing to the incoming and outgoing edges. Then,
for each edge $e = (\job, \job')$
we include the gate \cnot $q_{\job, i}$ $q_{\job', i'}$ between the job gadgets for $\job$ and $\job'$, where $i$ is the index of $e$
as an outgoing edge of $\job$ and $i'$ is its index as an incoming edge of $\job'$. Notice that
this ensures the gate \T $q_{j,0}$ must be executed before the gate \T $q_{\job',0}$ as desired. 
We let
$\depcirc(\jobset)$ denote the dependency circuit for a job set $\jobset$ and state this formally as 
\cref{thm:dependency_circuit}, where  $\prec^*$ is the transitive closure of the relation $g \prec_C g'$ on gates in a circuit which says $g'$ depends on $g$.

\begin{restatable}{lemma}{depcirclemma}
  \label{thm:dependency_circuit}
  Let $(\jobset, \prec_J)$ be a partially ordered set. For each pair of jobs $\job, \job' \in \jobset$, the $\T$ gate \T $q_{\job, 0}$ in $\depcirc(\jobset)$ satisfies the property that \T $q_{\job,0} \prec^* \T ~ q_{\job',0}$ if and only $\job \prec_J \job'$.
\end{restatable}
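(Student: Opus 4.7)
The plan is to establish both directions of the biconditional by carefully tracking which job each gate is associated with, using the explicit structure of the gadgets and transition $\cnot$s. Throughout, I will commit to a specific schedule for the circuit: lay out the job gadgets in some fixed topological order consistent with $\prec_J$, and insert each transition $\cnot$ for edge $(k, k')$ strictly after the last post-T $\cnot$ of gadget $k$ and strictly before the first pre-T $\cnot$ of gadget $k'$.

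For the forward direction, suppose $\job \prec_J \job'$. Then there is a chain of cover relations (Hasse edges) $\job = k_0 \to k_1 \to \cdots \to k_m = \job'$. For each edge $(k_a, k_{a+1})$ with outgoing index $i$ and incoming index $i'$, I will exhibit a concrete dependency chain from $\T~q_{k_a, 0}$ to $\T~q_{k_{a+1}, 0}$ of length four that walks through (i) the post-T $\cnot$ on $q_{k_a, i}$ (sharing $q_{k_a, 0}$ with the T gate), (ii) the transition $\cnot$ for $(k_a, k_{a+1})$ (sharing $q_{k_a, i}$), (iii) the pre-T $\cnot$ on $q_{k_{a+1}, i'}$ in gadget $k_{a+1}$ (sharing $q_{k_{a+1}, i'}$), and finally (iv) the target T gate (sharing $q_{k_{a+1}, 0}$). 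The chosen schedule guarantees each link respects the strict ordering. Concatenating these chains over $a$ and invoking transitivity of $\prec^*$ yields $\T~q_{\job, 0} \prec^* \T~q_{\job', 0}$.

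For the backward direction, the key idea is to define two maps $\mathrm{src}, \mathrm{dst}$ from gates to $\jobset$: every gate in a gadget for $k$ has $\mathrm{src} = \mathrm{dst} = k$, while every transition $\cnot$ for a Hasse edge $(k, k')$ has $\mathrm{src} = k$ and $\mathrm{dst} = k'$. The central lemma is: whenever $g \prec_C g'$, one has $\mathrm{dst}(g) = \mathrm{src}(g')$. I will prove this by case analysis on the shared qubit. If it is a central qubit $q_{k, 0}$, then both gates live in gadget $k$ and the claim is immediate. If it is an I/O qubit $q_{k, i}$ with $i > 0$, the gates that can touch it are, in circuit order, an optional incoming transition $\cnot$ (from some $(k'', k)$), the pre-T $\cnot$ in gadget $k$, the post-T $\cnot$ in gadget $k$, and an optional outgoing transition $\cnot$ (for some $(k, k''')$); checking the admissible pairs $(g, g')$ shows $\mathrm{dst}(g) = \mathrm{src}(g') = k$ in every case. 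Once this lemma is in hand, any chain $\T~q_{\job, 0} = g_0 \prec_C g_1 \prec_C \cdots \prec_C g_n = \T~q_{\job', 0}$ produces a sequence $\job = \mathrm{src}(g_0), \mathrm{dst}(g_0) = \mathrm{src}(g_1), \ldots, \mathrm{dst}(g_n) = \job'$ in which consecutive terms either coincide (when $g_i$ sits inside a gadget) or are connected by a Hasse edge (when $g_i$ is a transition $\cnot$). Transitivity of $\prec_J$ then gives $\job \preceq_J \job'$, and since circuit position strictly increases along any $\prec_C$-chain, the chain is nontrivial and $\job \neq \job'$, yielding $\job \prec_J \job'$.

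The main obstacle is the case analysis in the central lemma, which hinges on the exact sequencing rule for transition $\cnot$s described above. The delicate point is verifying that every incoming transition $\cnot$ into gadget $k$ really does appear before the pre-T $\cnot$ on the relevant I/O qubit of $k$, and every outgoing transition $\cnot$ out of $k$ appears after the corresponding post-T $\cnot$, even when a single I/O qubit $q_{k, i}$ serves as both an incoming and outgoing endpoint. Once this scheduling invariant is nailed down, the enumeration of admissible pairs becomes short and mechanical, and the rest of the argument follows from transitivity.
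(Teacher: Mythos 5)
Your forward direction follows the paper closely: both walk a chain of Hasse edges and exhibit, for each edge, an explicit four-link dependency chain through post-T, transition, and pre-T \cnot gates. Your backward direction, however, takes a genuinely different route. The paper inducts on the number of \T gates appearing in the dependency chain, with a base case asserting that a \T-free chain between two gadgets forces $(\job,\job')$ to be a Hasse edge. You instead attach $\mathrm{src}/\mathrm{dst}$ labels to every gate, prove the invariant $\mathrm{dst}(g)=\mathrm{src}(g')$ for each $\prec_C$-step by case analysis on the shared qubit, and read off a Hasse \emph{walk} from the chain. This is a more robust formulation: a \T-free chain can in fact hop through several intermediate gadgets when an I/O qubit $q_{k,i}$ is used both as the incoming endpoint for some edge $(k'',k)$ and the outgoing endpoint for some edge $(k,k''')$ --- the incoming transition then directly precedes the outgoing transition on $q_{k,i}$, bypassing the \T gate of gadget $k$ entirely. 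The paper's base case claim (``$(j,j')$ is an edge in the Hasse diagram'') is too strong in that situation, whereas your argument correctly produces a longer walk and concludes $\job \prec_J \job'$ by transitivity. The one thing worth tightening in your write-up is the final strictness step: $\job\neq\job'$ follows not merely from the chain being nontrivial but from the facts that circuit positions strictly increase so $g_0\neq g_n$, and each gadget contains exactly one \T gate, so two distinct \T gates must lie in distinct gadgets. Your commitment to a concrete schedule (transition \cnot for $(k,k')$ strictly after gadget $k$ and strictly before gadget $k'$) is a sound reading of the paper's phrase ``between the job gadgets'' and is what makes the four-gate ordering on each I/O qubit --- incoming transition, pre-T \cnot, post-T \cnot, outgoing transition --- hold uniformly.
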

\begin{proof}
First suppose $\job \prec_J \job'$. We can induct on the length of the path from $\job$ to $\job'$ in the Hasse diagram for $\jobset$.
If the length of the path is 1, there are \cnot gates $g_1 = \cnot ~ q_{\job, 0} ~  q_{\job, i}$, $g_2 = \cnot ~ q_{\job, i} ~ q_{\job', i'}$, and $g_3 = \cnot ~ q_{\job', 0} ~ q_{\job', i'}$,
such that $\T ~ q_{\job,0} \prec_C g_1, \prec_C g_2 \prec_C g_3 \prec_C \T ~ q_{\job',0}$. Thus, \T $q_{\job,0} \prec^* \T ~ q_{\job',0}$.
If the length of the path is greater than 1, with final edge $(\job_k, \job')$, then by the above argument,
\T $q_{\job_k,0} \prec^* \T ~ q_{\job',0}$, and by induction, \T $q_{\job,0} \prec^* \T ~ q_{\job_k,0}$.
By transitivity, \T $q_{\job,0} \prec^* \T ~ q_{\job',0}$.

Now assume \T $q_{\job,0} \prec^* \T ~ q_{\job',0}$. Then there is a chain $\T ~ q_{\job,0} \prec_C g_1 \prec_C \ldots \prec_C g_k \prec_C \T ~ q_{\job',0}$.
We can similarly proceed by induction on the number of $\T$ gates among the gates $g_1 \ldots g_k$. If there are none, then $(j, j')$ is an edge in the Hasse diagram for $\jobset$. 
Now suppose there are $k$ such gates, with the last one between
a qubit in the job gadget for job $j_k$ and the job gadget for $j'$. Then, by the base case,
there is an edge $(j_k, j)$ in the Hasse diagram, by induction, $j \prec_J j_k$ and by transitivity $j \prec_J j'$.
\end{proof}

\subsection{The Cycle Circuit}
\paragraph{Choosing the time limit}
If $d$ is the maximum in-degree or out-degree of any job, we need $2d+1$ time steps to execute a job
gadget. First, we execute $d$ \cnot gates, then the \T gate, then $d$ more \cnot gates. Therefore, each
time step in a \psp instance will need to be ``stretched'' into $2d+1$ time steps in a corresponding \scc instance. 

Additionally, between time steps $t$ and $t+1$, we must perform up to $dk$ \cnot gates connecting the 
job gadgets executed in step $t$ to those that depend upon them. These cross the architecture, so
in the worst case we need to execute them sequentially, for $dk$ time steps per transition. There are $t_p-1$ transitions between $t_p$ time steps. 
Thus, we set the time limit $t_s = (2d+1)t_p + (dk)(t_p-1)$.

\paragraph{The cycle circuit}
Extending the time limit as described above preserves \psp solutions, but it also could introduce spurious
solutions to the \scc instance that involve executing \T gates ``between'' time steps and have no corresponding \psp schedule. The problem is that
stretching the time steps has allowed squeezing multiple \T gates into different steps in a range
intended to represent a single \psp step.

The next key component of the proof is another circuit \textbf{on a distinct register of qubits} designed to
disallow such solutions. We call this the \emph{cycle circuit} because it consists of a repeating pattern of $\T$ gates to occupy the magic state qubits, so that there are only $t$ steps in which $\T$
gates from the dependency circuit can be performed. For each of the $k$ processors, we include $t$ cycles. 
Each cycle consists of the following pattern:
\begin{itemize}
  \item $d$ \T gates 
  \item A \cnot gate 
  \item $d$ more \T gates.
\end{itemize}

Between cycles we include $dk$ \T gates as well. We denote the cycle circuit with parameters
$d$, $k$, and $t_p$ as $\cyccirc(d,k,t_p).$ \cref{thm:cyclecirc} says that we can use the cycle circuit to restrict the dependency circuit to executing \T gates during the $t$
time steps where the cycle circuit executes \cnot gates.

\begin{restatable}{lemma}{cyclecirc}
  \label{thm:cyclecirc}
  Any scheduling for $\cyccirc(J,k,t_p)$ in $t_s = (2d+1)t_p + (dk)(t_p-1)$ time steps
  includes $t$ time steps where no \T gates are executed with $dk$ time steps where $k$ \T gates are executed interspersed between each.
\end{restatable}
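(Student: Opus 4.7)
The plan is to prove the lemma by exploiting the fact that, on each of the $k$ processors, the cycle circuit is engineered to form a single chain of dependent gates whose length equals the full time budget $t_s$. This rigidity forces a deterministic schedule.

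First, I would verify that the $(2d+1)t_p + dk(t_p-1) = t_s$ operations comprising one processor's portion of $\cyccirc$ --- namely the $2d+1$ gates of each of $t_p$ cycles together with the $dk$ transition \T gates between consecutive cycles --- form a totally ordered chain under the shared-qubit dependency relation. Concretely, the construction places all of these operations on a common ``spine'' qubit of that processor (or a small fixed pair of qubits used consecutively), so every adjacent pair in the sequence shares a qubit. By the Logical Order clause of \cref{valid_gr_def}, any valid schedule must then assign strictly increasing time steps to consecutive gates in the chain. Since the chain has exactly $t_s$ gates and only $t_s$ time steps are available, gate $i$ of the chain must be scheduled at time step $i$; there is no slack anywhere in the schedule.

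Second, because the $k$ processors use pairwise disjoint qubit registers and have structurally identical cycle circuits, the per-processor rigidity from the first step propagates uniformly: at each time step $i \in \{1, \ldots, t_s\}$, every processor is performing the same kind of operation. For each $c \in \{1, \ldots, t_p\}$, the CNOT of cycle $c$ lies at offset $d+1$ within its cycle, so it occurs at time step $(c-1)(2d+1+dk) + d + 1$ on every processor. These $t_p$ synchronized time steps are exactly those in which every processor executes a CNOT rather than a \T gate, yielding $t_p$ time steps with no \T gates. Every one of the remaining $t_s - t_p$ time steps has all $k$ processors simultaneously executing \T gates, giving $k$ simultaneous \T gates per step as claimed.

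Third, I would count the \T-gate steps separating consecutive CNOT steps: after one cycle's CNOT come $d$ remaining \T gates, then the $dk$ transition \T gates, then the $d$ pre-CNOT \T gates of the next cycle, for $2d + dk$ intervening \T-gate time steps (with $d$ further \T-gate steps before the first CNOT and after the last). This is the interspersed structure asserted by the lemma.

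The main obstacle is the first step --- establishing that the intended chain of length $t_s$ is actually realized as a single totally ordered chain under the shared-qubit relation --- since the entire argument rests on the absence of any schedule slack. Once this chain-length identity is in hand, the synchronization across processors and the counting of CNOT and \T positions are routine.
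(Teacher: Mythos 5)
Your proposal follows essentially the same route as the paper's proof: observe that each processor's slice of $\cyccirc$ is a single dependency chain of length exactly $t_s$ (all gates sharing one spine qubit), so the Logical Order constraint plus the $t_s$-step budget leaves zero slack and forces gate $i$ to fire at step $i$; then read off the positions of the CNOTs. The paper states this more tersely ("each gate is part of a chain of $t_s$ dependent operations, it must be executed at the time step corresponding to its position"), but the substance is identical, including the observation that the $k$ processors run in lockstep.

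One detail worth noting: your count of $2d + dk$ intervening \T-gate steps between consecutive CNOT steps is the one that actually follows from the construction ($T^d$, CNOT, $T^d$, $T^{dk}$, $T^d$, CNOT, \ldots), whereas the lemma as stated (and the paper's proof) says $dk$. You write that $2d+dk$ "is the interspersed structure asserted by the lemma," but in fact your number and the lemma's number disagree; the lemma appears to under-count by $2d$. This discrepancy is harmless for the overall reduction — what matters downstream is only that there are exactly $t_p$ rigidly placed \T-free steps and that all other steps saturate the $k$ magic-state qubits — but you should either flag the mismatch explicitly or avoid claiming agreement with a quantity you have just computed differently.
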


\begin{proof}
  Since each gate is part of a chain of $t_s$ dependent operations, it must be executed at
  the time step corresponding to its position $i$ in the chain. It cannot be executed in an earlier time step
  because either $i=1$ or there are $i-1$ steps necessary to execute the gates which precede it. It cannot be executed in a later time step because then
  the last gate in its chain will not be executed within the $t_s$ step limit. There are $t_s$ positions
  where all $k$ chains of dependent operations in $\cyccirc(J,k,t_s)$ include a \cnot gate, and thus $t$ time steps where
  $k$ \cnot gates are executed (and no \T gates). Moreover, these positions are separated by $dk$ positions where each chain includes a \T gate, for $k$ total simultaneous \T gates.
\end{proof}\

\begin{figure}
  \centering
  \leavevmode

    \Qcircuit @C=1em @R=.7em {
  & \gate{T^{d}} & \ctrl{1} & \gate{T^{d}} & \gate{T^{dk}} & \gate{T^{d}} &\ctrl{1} &  \gate{T^{d}} & \gate{T^{dk}} & \qw  &  \cdots && \gate{T^{d}}& \ctrl{1} & \gate{T^{d}} & \qw \\
  & \qw          &  \targ   & \qw          &\qw            & \qw          &\targ    & \qw           & \qw    & \qw & \cdots   && \qw& \targ    & \qw & \qw \\
  }
  \caption{Fragment of the cycle circuit. There is one layer of \cnot gates for each time step in the \psp instance and one copy of this two-qubit circuit per processor. The notation $\T^{x}$ means $x$ copies of the \T gate.}
  \label{fig:cyclic_circ}

\end{figure}
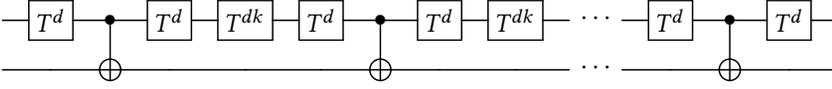
\subsection{Architecture}
\label{sec:arch_details}
The processor unit for this example is shown in \cref{fig:proc-unit}.
The yellow vertex (third row, second column from the right) is the magic state qubit. The other colored vertices represent the mapping that will be
used in one direction of the proof of the correctness of the reduction.
Along the top row we map the 4 qubits of each job gadget together. There are as many such slots as jobs,
to allow as many jobs to be mapped to this processor as desired.
The two pink qubits (near the magic state qubit) are a pair from the
cycle circuit.  In total, a processor unit is a $4 \times 6|J|+1$ grid (the additional column is for connecting processor units) with one magic state qubit in the second row from the bottom and second column from the right.

\begin{figure}
  \includegraphics[width=\linewidth]{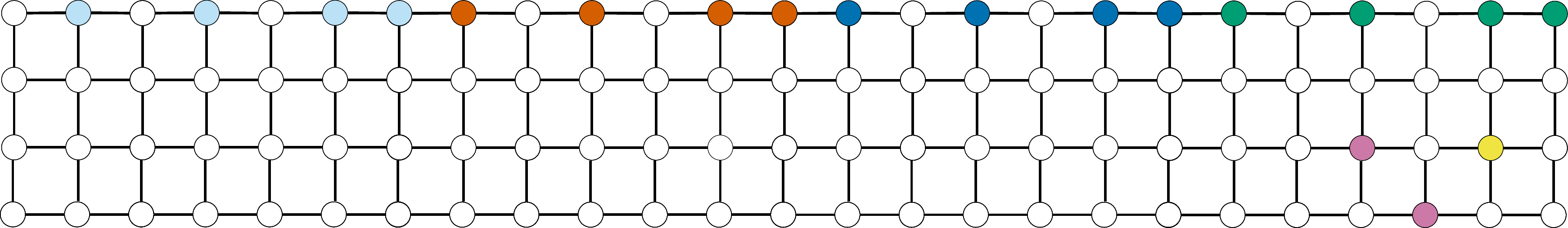}
  \caption{Processor unit for our running example}
  \label{fig:proc-unit}
\end{figure}

\subsection{Putting it all together}
We are now ready to describe the full reduction from \psp to \scc. Let
$\psp(\jobset, k, t_p)$ denote a \psp instance and let $d$ be the maximum in-degree or out-degree of any
job in $\jobset$. We construct the corresponding \scc instance, $\scc(\arch,\circuit, t_s)$
as follows. Set $\arch$ to consist of $k$ processor units. Set $C$ to be composed of two parallel, independent subcircuits: $\depcirc(J)$ acting on a qubit set $Q$
and $\cyccirc(\jobset,k,t)$ acting on a distinct set $Q'$. Set $t_s = (2d+1)t_p + (t_p-1)(dk)$.

Suppose the answer to $\psp(\jobset, k, t_p)$ is ``yes.'' Let $s$ be a corresponding schedule. We can
construct a solution $s'$ for $\scc(\arch, \circuit, t_s)$ from $s$. For the mapping, we map the qubits from a job
gadget to  the processor unit corresponding to the processor on which the job is
executed in $s$. One pair of qubits from the
cycle circuit is mapped to each processor unit as well.  A valid gate route
for the cycle circuit under this mapping can be constructed by executing each gate at the time step equal to its depth. By \cref{thm:cyclecirc}, this
leaves $t_p$ time steps to execute the $\T$ gates of the dependency circuit.
A valid gate route for the dependency circuit can be derived by executing the \T gates in the order described by $s$. That is,
if a job is executed in step $\sigma$ in $s$, then the corresponding \T gate is executed in step $\sigma$
among the $t$ available (the \cnot gates can be executed in the other time steps as described in the time
limit discussion). Since $s$ is a valid $\psp$ solution and the dependencies among these gates
match the $\psp$ instance by \cref{thm:dependency_circuit}, this describes a valid gate route.

If the answer to the $\scc(\arch,\circuit, t_s)$ is ``yes'', we can invert this process and construct a schedule $s$ for
$\psp(\jobset, k, t_p)$. We execute jobs in the order that their \T gates are executed. To choose the processors
for jobs, we fix a bijection between magic state vertices and processors, and execute the job on the processor
corresponding to the magic state vertex used to execute the \T gate. By \cref{thm:cyclecirc}, there are $t_p$ time steps
in such a schedule. By \cref{thm:dependency_circuit}, it will satisfy the dependencies in $\jobset$.
Finally, since there are only $k$ magic state qubits on $\arch$, it must satisfy the requirement that only
$k$ jobs are executed simultaneously.
\section{The Routing Problem is \np-Complete}
\label{sec:routing-np-complete}
In fact, the routing problem is \textsc{np}-complete even for
a given fixed map.
\begin{mybox}
\begin{definition}[Surface Code Routing Decision Problem]\  \\
  \indent Given: An architecture $\arch$ and circuit $\circuit$, a qubit map $\map$, and time limit $t_s$ \\
  \indent Question: Is there a valid gate route ($t$, $\rspace$, $\rtime$) for $\arch$, $\circuit$, $\map$ with $t \leq t_s$?
\end{definition}
\end{mybox}
\begin{restatable}{theorem}{routing}
  The Surface Code Routing Decision Problem is \textsc{np}-complete.
  \label{thm:routing-np-complete}
\end{restatable}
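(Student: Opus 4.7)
A certificate is a valid gate route $(t, \rspace, \rtime)$: the number $t$ is at most $|\circuit|$, and each path $\rspace(g)$ has length at most $|\vertices|$, so the certificate is of polynomial size in the input. Verification amounts to checking each of the five bullets of \cref{valid_gr_def}, all of which can be done in polynomial time by iterating over gates and vertices.

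\textbf{\np-hardness via Node-Disjoint Paths.} My plan is a reduction from the Node-Disjoint Paths (NDP) problem restricted to grid graphs, a well-studied problem known to be \np-complete (Kramer and van Leeuwen). An NDP instance consists of a grid $\graph$ together with pairs $(s_1,t_1),\ldots,(s_k,t_k)$, and asks whether there exist $k$ pairwise vertex-disjoint paths, the $i$-th connecting $s_i$ to $t_i$. Given such an instance, I would construct a routing instance $(\arch, \circuit, \map, t_s)$ as follows. The architecture $\arch$ is obtained by slightly enlarging $\graph$ with $\msf = \emptyset$, introducing $2k$ fresh qubits $q_i^c, q_i^t$ with $\map(q_i^c)$ placed near $s_i$ and $\map(q_i^t)$ placed near $t_i$. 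The circuit $\circuit$ is a single layer of $k$ independent gates $\cnot~q_i^c~q_i^t$ (so Logical Order is vacuous), and the time limit is $t_s = 1$. Because the gates are independent, the only obstruction to scheduling them in one time step is the Disjoint Paths condition of \cref{valid_gr_def}, so the routing instance is a yes-instance exactly when the NDP instance is.

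\textbf{Main obstacle: the shape constraint.} The subtlety is the shape requirement for \cnot routes: the first edge must be vertical from the control and the last horizontal into the target, whereas generic NDP solutions impose no such restriction. I plan to resolve this with a small endpoint gadget: for each $i$, replace the single vertices associated with $s_i$ and $t_i$ by a local construction in $\arch$ (for instance, attaching $\map(q_i^c)$ to the enlarged grid by a single vertical edge and $\map(q_i^t)$ by a single horizontal edge, with other adjacent vertices blocked) that forces any route leaving $\map(q_i^c)$ to begin vertically and any route entering $\map(q_i^t)$ to end horizontally, while otherwise preserving reachability and disjointness of paths inherited from $\graph$. The harder technical point is verifying that this gadget does not create extra short-circuit paths between different endpoint pairs that could be exploited to solve the routing instance without a corresponding NDP solution; this is done by checking that removing the gadget overlays from any shape-compliant disjoint routing yields node-disjoint paths in the original grid, giving the desired bijection between solutions and completing the reduction.
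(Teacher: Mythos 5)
Your \np-membership argument is fine, and you correctly identify both the right source problem (Node-Disjoint Paths on a grid) and the central technical obstacle (the vertical-at-control / horizontal-at-target shape requirement for \cnot routes). The paper's reduction uses the same source problem. However, your gadget sketch has two concrete gaps that the paper's construction is specifically designed to avoid.

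First, the architecture in this problem is required to be a \emph{complete} $m\times n$ grid graph: every vertex in $\{1,\ldots,m\}\times\{1,\ldots,n\}$ is present and every unit-distance pair is an edge. You cannot ``attach $\map(q_i^c)$ by a single vertical edge'' or ``block'' adjacent vertices by deleting them; the only mechanism the model gives you for blocking routing through a vertex is to place a magic state qubit there (or a circuit qubit, but then it needs gates of its own). Since you explicitly set $\msf = \emptyset$, you have no blocking mechanism at all, and a ``slightly enlarged'' grid gives paths lots of room to snake around your intended endpoints, satisfying the shape constraint trivially regardless of the NDP instance. Second, and more importantly, once you enlarge the grid, a single vertex of the original grid may correspond to several vertices of the new one, so two routing paths that would collide at that vertex in the NDP instance can pass through ``different copies'' of it in the enlarged grid; your closing sentence asserts that stripping the gadget overlays recovers node-disjoint paths, but that is exactly the claim that fails here, and it is where the real work lies. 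The paper resolves both issues at once by blowing up \emph{every} vertex of $G$ into a fixed-size ``vertex gadget'' block: empty gadgets (for vertices outside any terminal pair) are mostly filled with magic state qubits so that any through-path is forced through a unique central vertex, and full gadgets (for terminal vertices) additionally contain an \emph{internal} \cnot gate between two mapped qubits inside the block whose route, together with the external route that terminates there, saturates the gadget so that no second external path can pass through. That internal-gate device is the crucial ingredient you are missing; without it, or something equivalent, the backward direction of the reduction does not go through.
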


Consistent with notation from \cref{sec:prob_def}, we denote the surface code routing instance with architecture $\arch$, circuit $\circuit$,
and qubit map $\map$
by $\scr(\arch, \circuit, \map)$. Once again, \textsc{scr} is in $\np$, with a sequence of time steps representing a valid gate route serving
as a polynomial length certificate. We prove the hardness part of \cref{thm:routing-np-complete} by noting \scr is closely related to an \np-complete problem
originally considered in the context of VLSI layout \cite{wireRouting}. 

\begin{mybox}
\begin{definition}[Node-Disjoint Paths on a Grid Problem]\ \\
  \label{def:ndp}
  \indent Given: A grid graph $G = (V, E)$ and a set of vertex pairs $\mathcal{R} = \{(s_1, t_1), \ldots, (s_k, t_k)\}$ \\
  \indent Question: Do there exist vertex-disjoint paths connecting $s_i$ to $t_i$ for each pair $(s_i, t_i) \in \mathcal{R}$?
\end{definition}
\end{mybox}

Node-Disjoint Paths on a Grid can be reduced to Surface Code Routing because the former problem is essentially just the special case of the latter
where there are no dependencies and one time step. The only complication is that not every set of disjoint paths is a valid gate route because of the additional requirement that paths must terminate in a vertical edge at the control qubit
and a horizontal edge at the target. Next, we provide a construction which resolves this issue.

The key idea of this reduction is a pair of constructions called vertex gadgets shown in \cref{fig:vertex_gadgets}.
The \emph{empty vertex gadget} on the left is used to represent vertices that are not included in the set of vertex pairs in the given NDP instance.
The grey filled-in vertices are magic state qubits, which cannot be used to route gates.
The \emph{full vertex gadget}
on the right is used to represent vertices that are not included in the set of vertex pairs in the given NDP instance. The center vertex (in blue)
and the two green vertices diagonal to it are locations where we will map qubits from the circuit we will construct in the reduction.
In particular, we will apply \cnot gates between qubits mapped to the center vertices of different full vertex gadgets and  
between qubits mapped to the diagonal vertices of the same full vertex gadget. For this reason, we refer to the diagonal vertices as the \emph{internal} locations.
As before, the other grey filled-in vertices are magic state qubits.

\begin{figure}[h]
  \begin{subfigure}{0.25\linewidth}
    \includegraphics[width=\linewidth]{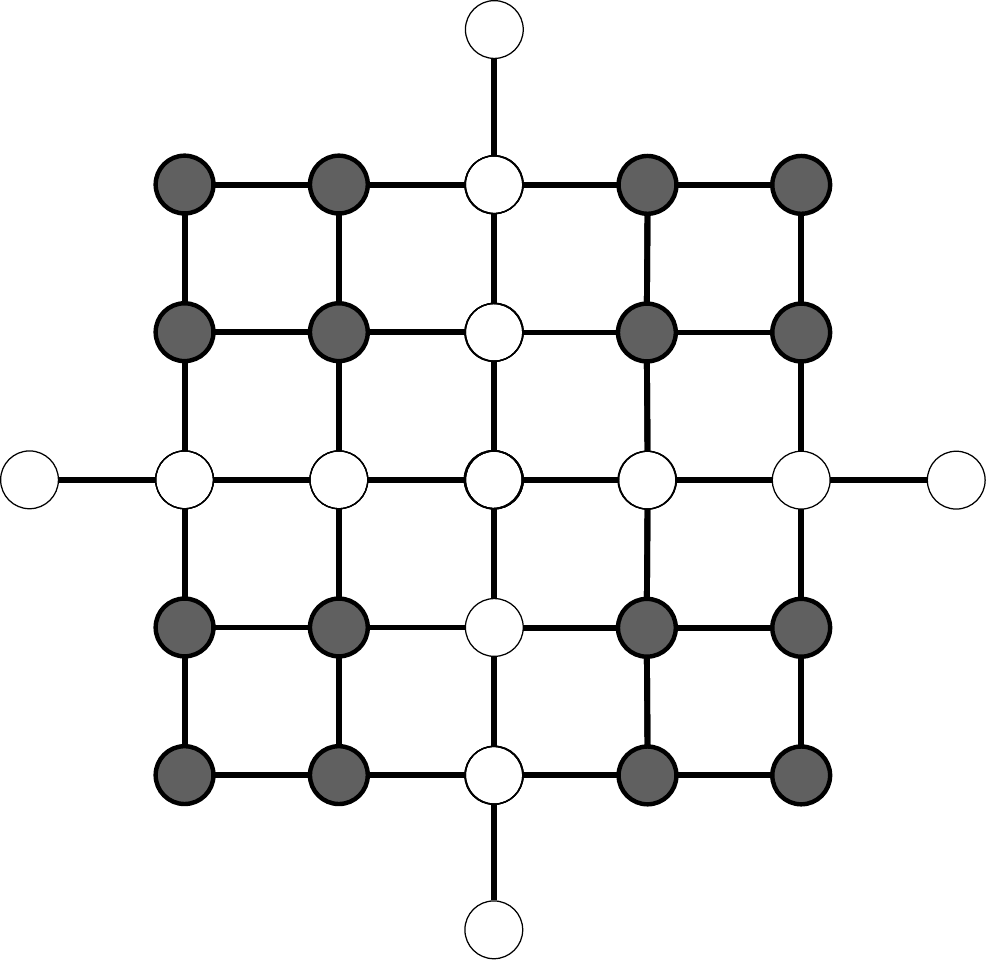}
    \caption{Empty}
    \end{subfigure}
    \hspace{4cm}
  \begin{subfigure}{0.25\linewidth}
    \includegraphics[width=\linewidth]{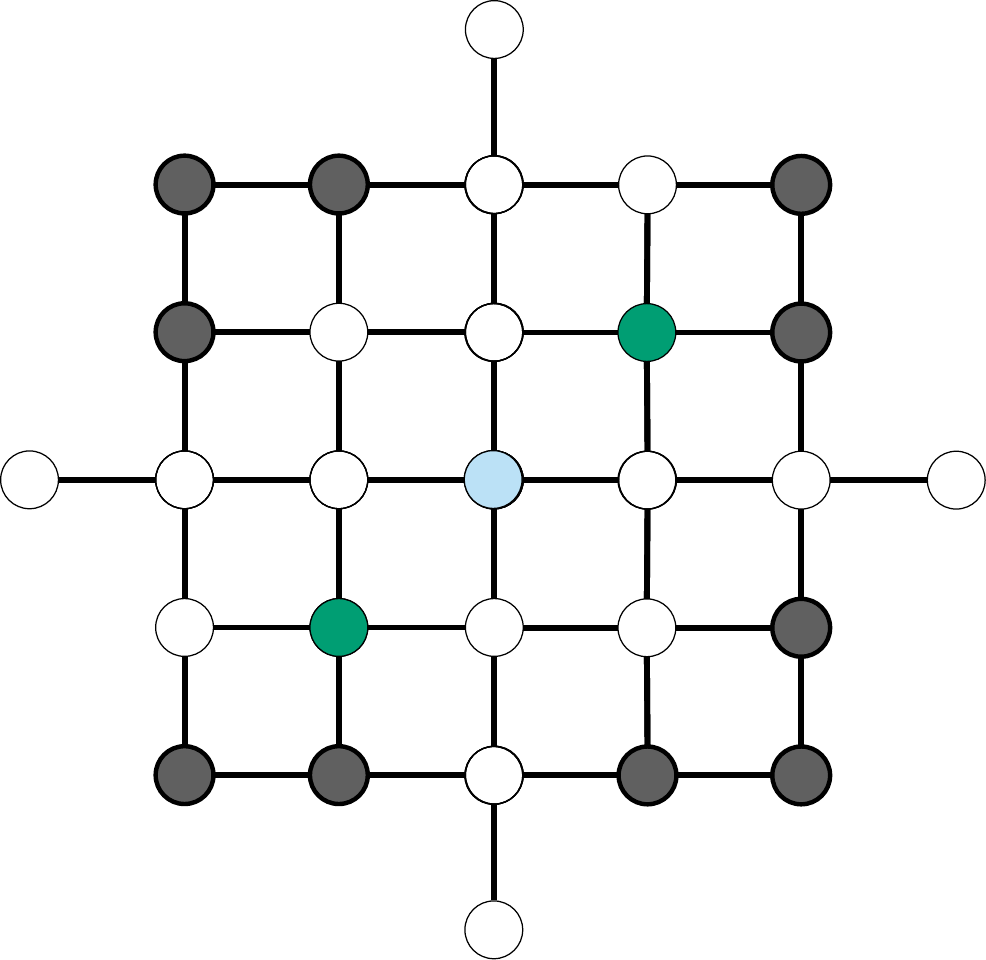}
    \caption{Full}
  \end{subfigure}
  \caption{Vertex gadgets}
  \label{fig:vertex_gadgets}
  \end{figure}

With these gadgets in hand we can describe the reduction. Let $\text{NDP}(G, \mathcal{R})$ be an instance of NDP where $\mathcal{R} = \{(s_1,t_1), \ldots, (s_k, t_k)\}$.
We may assume that each vertex appears in at most one pair in $\mathcal{R}$ as otherwise there is trivially no solution.
We will now construct the corresponding \scr instance $\scr(\arch, \circuit, \map, t_s)$. First, set $t_s=1$. If $G$ is an $n\times m$ grid, then $\arch$,
is chosen to be a $n \times m$ grid of vertex gadgets. In this grid of gadgets, the gadget in grid position $(r,c)$ is the full vertex gadget if vertex $(r,c) \in G$ is
in some pair in $\mathcal{R}$; otherwise, it is the empty vertex gadget.  This process is illustrated in \cref{fig:routing_red} for the $2\times 2$ grid and
$\mathcal{R}$ consisting of a single pair $((1,1), (2,2))$. The solid regions in the figure consist entirely of magic state qubits.
The circuit $\circuit$ includes 
one ``external'' \cnot gate $\cnot ~\text{src}_i ~ \text{tar}_i$ for each pair $(s_i, t_i) \in \mathcal{R}$.  The map $\map$
associates these gates with the proper vertices by mapping $ \text{src}_i$ to the center vertex of the gadget for $s_i$, and mapping  $ \text{tar}_i$
to the center vertex of the gadget for $t_i$. The circuit $\circuit$ also includes one ``internal'' \cnot gate  $\cnot ~\text{tr}_v ~ \text{bl}_v$ for each vertex $v$ that appears in a pair in $\mathcal{R}$,
and $\map$ maps $\text{tr}_v$ to the top-right internal location in the gadget for $v$ and maps $\text{bl}_v$ to the bottom-left internal location in the gadget for $v$.
For the same simple example, this results in three \cnot gates over three disjoint pairs of qubits.

\begin{figure}
  \includegraphics[width=0.65\linewidth]{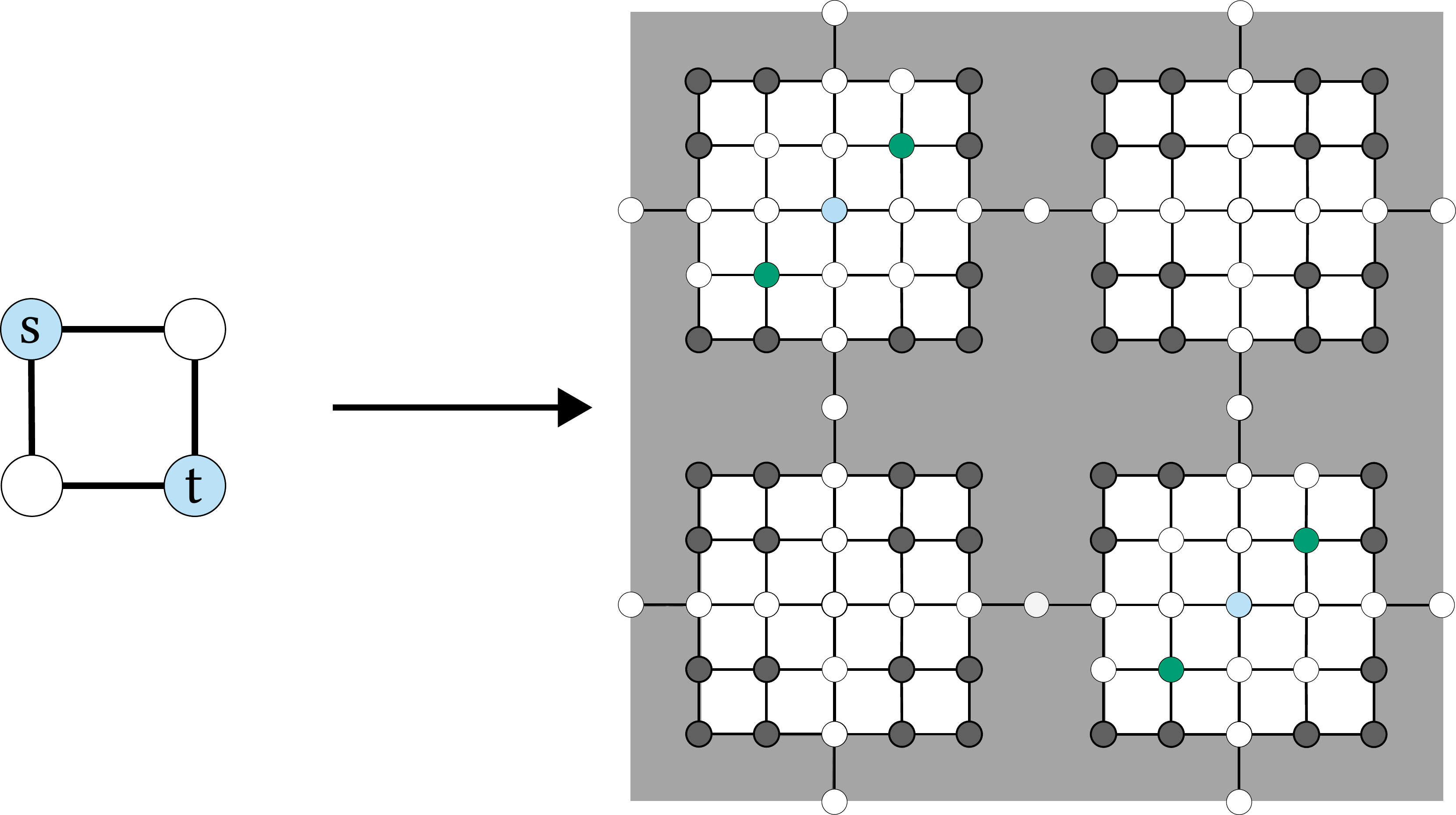}
  \caption{Illustration of the reduction}
  \label{fig:routing_red}
\end{figure}

Next, we show that the answer to $\scr(\arch, \circuit, \map, t_s)$ is ``yes'' if and only if the answer 
to $\text{NDP}(G, \mathcal{R})$ is ``yes.'' The intuition behind the correspondence is shown in \cref{fig:routing_converting_sols},
using an solution to our running example.

\begin{figure}[h]
  \includegraphics[width=0.65\linewidth]{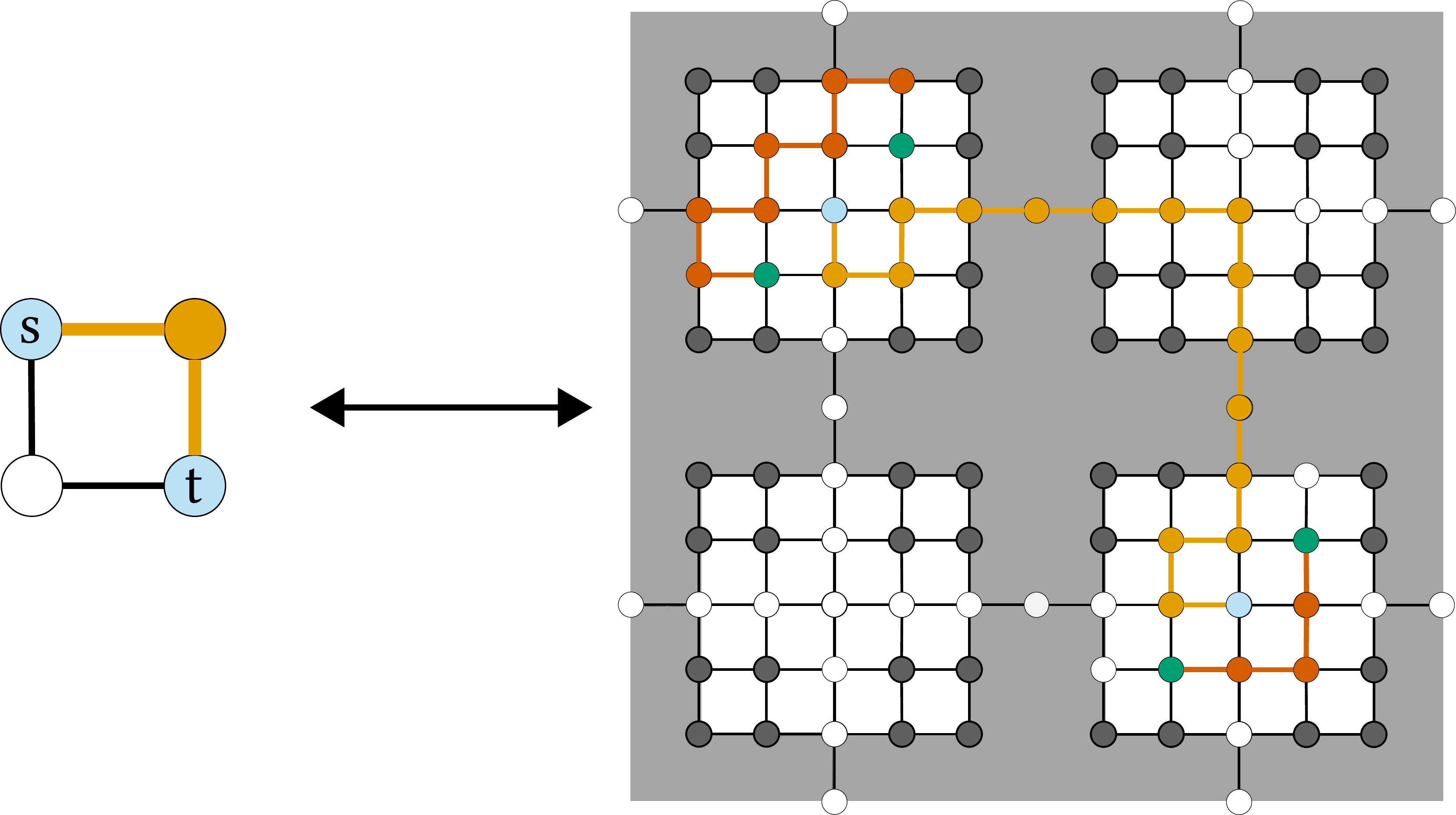}
  \caption{Correspondence between solutions to the two problems}
  \label{fig:routing_converting_sols}
\end{figure}

First suppose there is some set $\mathcal{P}$ of vertex-disjoint paths for routing
the pairs in $\mathcal{R}$. For the external gates,
we can construct a set of vertex-disjoint paths between the gadgets containing their qubits based on $\mathcal{P}$.
In particular, the route for a gate  $\cnot ~\text{src}_i ~ \text{tar}_i$, visits a vertex gadget if and only if the corresponding
vertex is included in the routing for $(s_i, t_i)$ in $\mathcal{P}$, and traverses them in the same order. Then, the final edge of 
the path can be made to have the right orientation by including vertices within the full vertex gadgets at the end points. 
The internal \cnot gates can be simultaneously be routed
using other vertices within the gadget.

Now conversely suppose there is a valid gate route for $\arch$, $\circuit$ and $\map$.
We claim  the routing for two different external \cnot gates cannot use vertices in the same vertex gadget.
Suppose for the sake of contradiction that the routes for two different gates include vertices from the same
vertex gadget. This cannot be an empty vertex gadget because any path that visits an empty vertex gadget
must include the center vertex, but, by definition, valid gate routes are vertex disjoint. Therefore, 
the two paths must both visit the same full vertex gadget corresponding to some vertex $v$. In the case, at least one of them must not
terminate in this gadget, meaning it  enters from and adjacent gadget and exits to another adjacent gadget. However, such a path in conjunction with the route
for the external \cnot gate which does terminate at this gadget make it impossible to route the internal gate 
$\cnot ~\text{tr}_v ~ \text{bl}_v$. Thus, we have reached a contradiction. Now that we have established that at most one gate route
can visit each vertex gadget, we can construct vertex-disjoint paths for the pairs in $\mathcal{R}$
by inverting the process described in the previous direction, where the sequence of vertices used in the path for the pair $(s_i, t_i)$
is exactly the vertices corresponding to the sequence of gadgets used in the route for the gate  $\cnot ~\text{src}_i ~ \text{tar}_i$.

\section{Proof of \cref{thm:enc-correct}}
\label{sec:proof_enc}
Consider an \scc problem consisting of the architecture $\arch$ and circuit $\circuit$. 
First, let $S=(M, t_s, \rspace, \rtime)$ be a solution to the \scc problem where 
$\map$ is a qubit map and $(t_s, \rspace, \rtime)$ is a valid gate route for $\arch, \circuit$, and $\map$.
 Let $\mathcal{I}_{S}$ be the assignment constructed in the following way:
 \begin{itemize}
  \item Set $\vmap(q, v)$ if and only if $M(q) = v$,
  \item Set $\vexec(g, t)$ if and only if $\rtime(g)=t$.
  \item  Set $\vedge(u, v, g, t)$ if and only if $u$ and $v$ are consecutive vertices (with $u$ before $v$) in $\rspace(g)$ and $\rtime(g)=t$. 
 \end{itemize}

We will now show that $\mathcal{I}_{S}$ satisfies $\varphi(A, C, t_s)$  by considering each constraint:

\begin{enumerate}
\item \textsc{map-valid}:  By the fact that $M$ meets our definition of a qubit map, setting the $\vmap(q, v)$ to true if and only if $M(q) = v$ satisfies the $\textsc{map-valid}$ 
constraint. 
\item  \textsc{gates-ordered}: Since $(t_s, \rspace, \rtime)$ is a valid gate route for $\map$, meeting the ``Logical Order'' part of the definition, setting $\vexec(g, t)$ if and only if $\rtime(g) = t$ satisfies the $\textsc{gates-ordered}$ constraint.
\item  \textsc{data-safe}:  A $\vedge$ variable is set to true only when it is indexed by vertices in $\rspace(g)$ for a gate $g$ and a $\vmap$ variable is set to true only when it is indexed by a vertex in the image of $\map$.
The ``Data Preservation'' part of the definition of a valid gate route implies that no vertex can meet both of these criteria and that no magic state vertex can be in an $\rspace$ path. Therefore, $\textsc{data-safe}$ is satisfied.
\item \textsc{disjoint}: Choose a fixed time step $t$. Each vertex $u$ can be included in at most one sequence $\rspace(g)$ where $g$ is such that $\rtime(g)=t$ by the ``Disjoint Paths'' condition on a valid gate route. Since the sequence $\rspace(g)$ is a path, each vertex appears at most once with unique vertices in adjacent positions. Thus, \textsc{disjoint} is satisfied.
\item \textsc{cnot-routed}: The \textsc{cnot-routed} constraint is satisfied because $\rspace$ meets the ``\cnot Routing'' part of the definition of a valid gate route. 
\item \textsc{t-routed}: The \textsc{t-routed} constraint is satisfied because $\rspace$ meets the``\T Routing'' part of the definition of a valid gate route.
\end{enumerate}
Now let $\mathcal{I}$ be an arbitrary model of $\varphi(A, C, t_s)$ and let $S_{\mathcal{I}} = (M, t_s, \rspace, \rtime)$ be derived in the following way:

\begin{itemize}
  \item Set $M(q) = v$ if and only if $\vmap(q, v)$ is set to true in $\mathcal{I}$.
  \item Set $\rtime(g)=t$. if and only if $\vexec(g, t)$ is set to true in $\mathcal{I}$.
  \item Include $u$ and $v$ as consecutive vertices in $\rspace(g)$ if only if and $\vedge(u,v,g,t)$ and $\vexec(g, t)$ are set to true in $\mathcal{I}$ . 
 \end{itemize}

We first argue that $\rspace$ is in fact a well-defined path with the appropriate end points for each gate $g$. Without loss of generality let $g=$ \cnot $q_i$ $q_j$. The \textsc{cnot-start} and \textsc{disjoint} constraints 
mean that  $\rspace(g)$ has exactly one initial vertex which appears as the first index of a path variable but not the second. Moreover, this initial vertex is $M(q_i)$ and the second vertex is a vertical neighbor. Similarly the \textsc{cnot-reach-target} and \textsc{disjoint} constraints 
mean that  $\rspace(g)$ has one final vertex, $M(q_j)$, preceded by a horizontal neighbor. Finally \textsc{cnot-inductive} and \textsc{disjoint} constraints together mean that every other vertex $v$ has exactly one incoming edge (from a path variable where $v$ is the second index) and outgoing edge (from a path variable where $v$ is the first index), ensuring a well-defined linear sequence.
Thus we have established the $\rspace$ function meets the ``\cnot Routing'' and ``\T Routing'' parts of the definition of a valid gate route. 

The last nontrivial step is to show that these paths are vertex-disjoint. Suppose that some vertex $v$ is included in 
$\rspace(g)$ and $\rspace(g')$ where $\rtime(g) = \rtime(g')$. By the \textsc{disjoint} constraint, this can only occur if $\vedge(u,v, g, t)$ and $\vedge(v,w, g', t)$
are both set to true for some vertices $u,w$. But then this means that $v$ is not in the image of $M$ and $v \not\in \msf$ by the \textsc{data-safe} constraint.
Therefore by the \textsc{cnot-inductive} and \textsc{t-inductive} constraints, $g = g'$. 

It is straightforward to see $\map$ is a qubit map because $\mathcal{I}$ satisfies \textsc{map-valid}. Likewise,
it follows that $(t_s, \rspace, \rtime)$ meets the ``Logical Order'' and ``Data Preservation'' conditions
of a valid gate route for $\map$ because $\mathcal{I}$ satisfies \textsc{gates-ordered} and \textsc{data-safe}, respectively.

\section{Additional Results}
\label{sec:plots}
\begin{figure}[h]
  \includegraphics[width=0.285\linewidth]{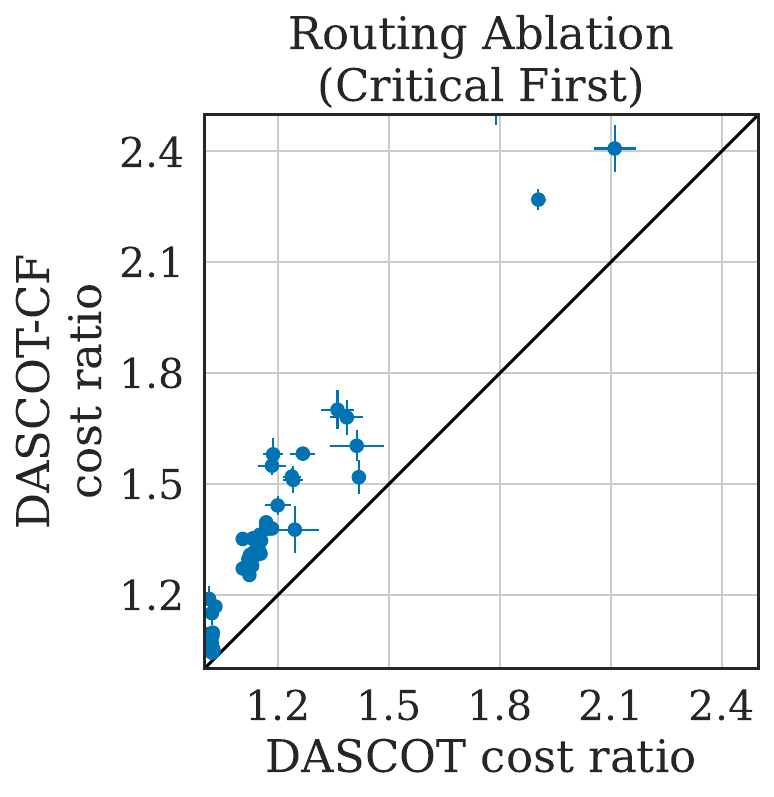}
  \caption{Comparing to critical-first routing}
  \label{fig:rq3}
\end{figure}
\end{document}